\documentclass[11pt]{article}

\usepackage[a4paper,margin=1in]{geometry}
\usepackage{amsthm}

\usepackage{amsmath,amsfonts}
\usepackage{bm}
\usepackage{graphicx}
\usepackage{subcaption}
\usepackage{mathtools}
\mathtoolsset{showonlyrefs}
\usepackage{xcolor}
\usepackage{hyperref}
\hypersetup{colorlinks=true,urlcolor=red,citecolor=black,linkcolor=black}
\usepackage{graphicx}
\usepackage[english]{babel}
\usepackage[utf8]{inputenc}
\usepackage{siunitx}
\usepackage{amsmath}
\usepackage{amsfonts}
\usepackage{float}
\usepackage{bm}
\usepackage{subcaption}
\usepackage{color,xcolor}
\usepackage[english]{babel}

\usepackage{todonotes}
\usepackage{mathtools}
\mathtoolsset{showonlyrefs}

\theoremstyle{plain}
\newtheorem{theorem}{Theorem}
\newtheorem{lemma}{Lemma}

\theoremstyle{definition}
\newtheorem{definition}{Definition}

\theoremstyle{remark}
\newtheorem{remark}{Remark}

\newcommand{\abs}[1]{\left\lvert#1\right\rvert}

\newcommand{\barpow}[1]{\left\lfloor#1\right\rceil}
\newcommand{\sign}[1]{\text{{\normalfont sign}}\left(#1\right)}
\newcommand{\graph}[1]{\mathcal{Q}\left(\mathcal{X}_{#1}\right)}
\newcommand{\graphm}[1]{\mathcal{M}(\hat{\mathcal{X}}_{#1})}

\def\Blue#1{#1}

\makeatletter
\renewcommand{\@fnsymbol}[1]{\@alph{#1}}
\makeatother

\begin{document}

\title{
Autonomous and non-autonomous fixed-time leader-follower consensus for second-order multi-agent systems%
\thanks{\textcolor{red}{
This is the accepted version of the manuscript: Trujillo, M.A., Aldana-López, R., Gómez-Gutiérrez, D. et al. Autonomous and non-autonomous fixed-time leader–follower consensus for second-order multi-agent systems. Nonlinear Dynamics 102, 2669–2686 (2020). DOI: 10.1007/s11071-020-06075-7. Please cite the publisher's version. For the publisher's version and full citation details see: \href{https://doi.org/10.1007/s11071-020-06075-7}{https://doi.org/10.1007/s11071-020-06075-7}.
}}
}

\author{
M. A. Trujillo\thanks{CINVESTAV Unidad Guadalajara, Av. del Bosque 1145, Zapopan, 45019, Jalisco, Mexico}
\and
R. Aldana-L\'opez\thanks{Department of Computer Science and Systems Engineering, University of Zaragoza, Zaragoza, 50009, Espa\~na}
\and
D. G\'omez Guti\'errez\thanks{Intel Tecnolog\'ia de M\'exico, Intel Labs, Multi agent autonomous systems lab, Jalisco, M\'exico. Tecnol\'ogico de Monterrey, Escuela de Ingenier\'ia y Ciencias, Jalisco, M\'exico. Email: David.Gomez.G@ieee.org}
\and
M. Defoort\thanks{LAMIH, UMR CNRS 8201, INSA, Polytechnic University of Hauts de France, Valenciennes, 59313 France}
\and
J. Ruiz Le\'on\footnotemark[1]
\and
H. M. Becerra\thanks{Centro de Investigaci\'on en Matem\'aticas (CIMAT), Jalisco S N, Col. Valenciana, 36023, Guanajuato, Mexico}
}
\date{}
\maketitle

\begin{abstract}
This paper addresses the problem of consensus tracking with fixed-time convergence, for leader-follower multi-agent systems with double-integrator dynamics, where only a subset of followers has access to the state of the leader. The control scheme is divided into two steps. The first one is dedicated to the estimation of the leader state by each follower in a distributed way and in a fixed-time. Then, based on the estimate of the leader state, each follower computes its control law to track the leader in a fixed-time. In this paper, two control strategies are investigated and compared to solve the two mentioned steps. The first one is an autonomous protocol which ensures a fixed-time convergence for the observer and for the controller parts where the Upper Bound of the Settling-Time (\textit{UBST}) is set a priory by the user. Then, the previous strategy is redesigned using time-varying gains to obtain a non-autonomous protocol. This enables to obtain less conservative estimates of the \textit{UBST} while guaranteeing that the time-varying gains remain bounded. Some numerical examples show the effectiveness of the proposed consensus protocols.
\end{abstract}

\section{Introduction}
In the last years, the problems of coordination and control of Multi-Agent System (MAS) have been widely studied (see for instance \cite{aldana2019predefined,li2014_Book,olfati2007,Parker2008,ren2008_Book}), due mainly to the ability of a MAS to face complex tasks that a single agent is not able to handle. Distributed control approaches applied to a MAS require a communication network allowing to share information with a subset of agents (neighbors). In this context, several interesting problems and applications have been investigated in the literature, for instance, synchronization of complex networks \cite{Arenas2008_sync}, distributed resource allocation \cite{Xu2017_ResAll}, consensus \cite{Olfati-Saber2004} and formation control of multiple agents \cite{Oh2015_Survey}. Among all the mentioned problems, an interesting one is the leader-follower consensus problem where a set of agents, through local interaction, converge to the state of a leader, even though the leader may not be accessible for all agents.

The consensus problem consists in reaching a common agreement state by exchanging only local information \cite{Olfati-Saber2004,Ren2007}. Linear average consensus protocols with asymptotic convergence were proposed in \cite{Olfati-Saber2004,Ren2007}. It has been demonstrated that the second smallest eigenvalue of the Laplacian graph (i.e. the algebraic connectivity) determines the convergence rate of the MAS. Furthermore, the problem of tracking a reference by a MAS (i.e. leader-follower consensus problem) has been investigated where the common agreement to reach is the state of a reference imposed by a leader which evolves independently of the MAS \cite{Cao2015,Guo2011,Ni2010,Ren2010}. In \cite{Ren2010}, the consensus problem has been addressed where the agents reach a time-varying reference. However, the control protocol has been derived for first-order MAS. The problem for second-order MAS has been studied in \cite{Guo2011} and extended to high-order MAS in \cite{Cao2015,Ni2010}. Furthermore, \cite{Lu2019} has considered the consensus tracking control problem of uncertain nonlinear MAS with predefined accuracy. Nevertheless, in these works, the convergence is only asymptotic.

To improve the convergence rate of a MAS, finite-time consensus protocols have been investigated in \cite{Shang2012}. Finite-time stability has been studied in \cite{Bhat2000,Moulay2006,Zhao2014}. However, the settling time is an unbounded function of the initial conditions of the system. Therefore, the concept of fixed-time stability has been introduced and applied to systems with time constraints \cite{Aldana2019,Polyakov2012,Sanchez-Torres2018}. In this case, the settling time is bounded by a constant which is independent of the initial conditions of the system. In the literature, there are several contributions on algorithms with fixed-time convergence property, such as stabilizing controllers \cite{Polyakov2012,Polyakov2015}, state observers \cite{Menard2017}, multi-agent coordination \cite{Aldana-Lopez2018a,Defoort2015},  online differentiation algorithms \cite{Angulo2013,Cruz-Zavala2011}, etc. Nevertheless, one can mention that the fixed-time stabilization problem of second-order systems is not an easy task since usually the settling time is not provided or is overestimated. Indeed, there are several works for second-order systems stabilization based on block-control techniques (\cite{Li2017,Polyakov2012,Zuo2015_iet,Zuo2016_CCC}) or on the homogeneity in the bi-limit (\cite{Tian2017}). However, the homogeneity-based algorithms do not provide an estimate of the settling time and many block-control-based algorithms neglect some transient when the system trajectories stay on a region around a manifold.  Moreover, the works \cite{Khoo2009,Khoo2014,Zhang2012_Auto,Zuo2015_b} deal with the problem of leader-follower consensus. Nevertheless, these algorithms require that each follower know the inputs of its neighbors simultaneously, which causes communication loop problems. In this paper, we address the leader-follower consensus problem of a MAS, where each agent of the MAS estimates and tracks the trajectory of the leader using local available information even when just a subset of MAS has access to the leader state, and we provide the necessary conditions to achieve the convergence in a fixed-time.

A Lyapunov differential inequality for an autonomous system to exhibit fixed-time stability was presented in \cite{Polyakov2012}. Based on this methodology using autonomous systems, the consensus problem with fixed-time convergence property has been derived for first-order MAS in \cite{Aldana-Lopez2018a,Zuo2014,Zuo2016}. Nevertheless, in \cite{Zuo2016}, the \textit{UBST} has been estimated from design parameters, algebraic connectivity and group order. Thus, it cannot be easily tuned. In \cite{Aldana-Lopez2018a}, the \textit{UBST} was a design parameter which was established a priory by the user. However, the settling time becomes over-estimated and the slack between the settling time and the \textit{UBST} is  conservative. Furthermore, the works \cite{Anggraeni2019,Ning2018,Zuo2018} have addressed the consensus tracking problem, i.e., the MAS follows a trajectory imposed by the leader. The scheme presented in \cite{Ning2018} has introduced a fixed-time algorithm considering inherent dynamics for the agents. However, disturbances were not taken into account. The leader-follower consensus problem for agents with second-order and high order integrator dynamics has been addressed in~\cite{Tian2019,Zuo2018}, respectively. The approach was based on a fixed-time observer to estimate the leader state and a fixed-time controller to drive the state of the agent to the estimated leader state. Unfortunately, although the observer can be designed to converge at a desired \textit{UBST} (with a conservative  estimate of the \textit{UBST}), the controller is based on the homogeneity theory~\cite{Andrieu2008} and no methodology has been provided to estimate an \textit{UBST}. Thus, although the algorithm is fixed-time convergent, the desired convergence time cannot be set a priory by the user. To address this issue, autonomous algorithms were proposed in~\cite{Ni2017_a,Ni2019,Shi2020} with an estimation of the \textit{UBST}. Unfortunately, such estimate of the \textit{UBST} results very conservative leading to over-engineered consensus protocols. Therefore, the design of fixed-time leader-follower consensus algorithms where the \textit{UBST} is set explicitly as a parameter of the system, as well as the reduction of the conservativeness of the estimate of the \textit{UBST} is of a great interest. 

An approach to derive predefined-time consensus algorithms has been addressed via a linear function of the sum of the errors between neighboring nodes together with a time-varying gain, using time base generators~\cite{Morasso1997}, see e.g.,~\cite{Colunga2018c,Liu2018,Ning2019,Wang2017,Wang2018,Yong2012,Yucelen2019,Zhao2018}. This approach ensures that the convergence is obtained exactly at a predefined time. However, such time-varying gain becomes singular at the predefined time, either because the gain goes to infinite as the time tends to the predefined time~\cite{Yong2012,Yucelen2019,Zhao2018} or because it produces Zeno behavior (infinite number of switching in a finite-time interval) as the time tends to the predefined time~\cite{Liu2018}. 

In this paper, we present a methodology to achieve leader-follower consensus with fixed-time convergence. It consists in two steps. The first one estimates the leader state (position and velocity) using a fixed-time observer that only requires information of the neighbors. Then, the second step computes the control law to drive the followers to the observer states in a fixed-time. Moreover, we investigate two protocols to solve the considered problems. In one protocol, called autonomous protocol, the convergence for the observer and for the controller is in fixed-time, where the \textit{UBST} is established a priory by the user. In the second protocol, called non-autonomous protocol, we redesign the previous one by adding time-varying gains to obtain a less conservative estimate of the \textit{UBST} while guaranteeing that the time-varying gains remain bounded. The contribution lies in the following. A novel protocol is derived for second-order MAS with fixed-time stability where the \textit{UBST} is a design parameter. Moreover, a non-autonomous protocol is presented to achieve the convergence in a predefined-time with less conservative estimates of the \textit{UBST} compared to existing results in the literature. In fact, the resulting \textit{UBST} can be made arbitrarily tight. At last, our algorithm yields a bounded time-varying gain, thus we avoid the drawbacks present in the existing algorithms with time-varying gains where the gain goes to infinity.

This work is structured as follows. Section \ref{sec:preliminaries} recalls some definitions and results from graph theory, and preliminaries on finite-time and fixed-time convergence are presented. In Section \ref{sec:problem_statement}, the problem of consensus tracking with fixed-time convergence is formulated. Section \ref{Sec:AutConsensus} introduces two methodologies to solve the consensus tracking problem. The first (resp.  second) one is based on algorithms to obtain a fixed-time stable autonomous (resp. non autonomous) system with an \textit{UBST} function independent of the initial conditions of the system. Numerical results using both methodologies are shown in Section \ref{sec:results}. Finally, the conclusions are presented in Section \ref{sec:conclusions}.
 
\section{Preliminaries}\label{sec:preliminaries}
\subsection{Graph Theory}

In this section, some notations and preliminaries about graph and consensus theory  are presented. One can refer to \cite{Godsil2001,Olfati-Saber2007} for a deeper insight in these fields. This paper is only focused on undirected graphs for the follower agents.

\begin{definition}
A graph consists of a set of vertices $\mathcal{V}(\mathcal{X})$ and a set of agents $\mathcal{E}(\mathcal{X})$ where an edge is an unordered pair of distinct vertices of $\mathcal{X}$. $ij$ denotes an edge, if vertex $i$ and vertex $j$ are adjacent or neighbors. The set of neighbors of $i$ in graph $\mathcal{X}$ is expressed by $\mathcal{N}_{i}(\mathcal{X})=\{j\in \mathcal{X} :ji\in\mathcal{E}(\mathcal{X})\}$.
\end{definition}

\begin{definition}
A path from $i$ to $j$ in a graph is a sequence of distinct vertices starting with $i$ and ending with $j$ such that consecutive vertices are adjacent. If there is a path between any two vertices of graph $\mathcal{X}$, then $\mathcal{X}$ is said to be connected. 
\end{definition}

\begin{definition}
Let $\mathcal{X}$ be a weighted graph such that $ij\in \mathcal{E}$ has weight $a_{ij}$ and let $N=|\mathcal{V}(\mathcal{X})|$. Then, the adjacency matrix $A(\mathcal{X})$  (or simply $A$ when the graph is clear from the context) is an $N\times N$ matrix where $A=[a_{ij}]$ and, the Laplacian is denoted by $\mathcal{Q}(\mathcal{X})$ (or simply $\mathcal{Q}$) and is defined as $\mathcal{Q}(\mathcal{X})=\Delta(\mathcal{X})-A(\mathcal{X})$ where $\Delta(\mathcal{X})=\text{diag}(d_{1},...,d_{N})$ with $d_{i}=\sum_{j\in \mathcal{N}_{i}}a_{ij}$.   
\end{definition}
Through this work, it is assumed that $a_{ij}=a_{ji}$, i.e. only undirected and balanced graphs are considered.


\begin{definition} 
Let $\hat{\mathcal{X}}$ be a weighted graph among all the agents (i.e. the leader and the followers). Then, the communication matrix between all the agents is represented by $\graphm{}=\graph{}+\mathcal{B}$ where $\mathcal{B}=\text{diag}(b_{1}\hdots b_{N}) \in \mathbb{R}^{N\times N}$  with $b_{i}>0$ when there is an edge from the leader to the $i$-agent and $\graph{}$ is the weighted graph associated to the communication topology of the followers.
\end{definition}
\begin{lemma}\cite{Ren2013,Ni2010} 
Let $\hat{\mathcal{X}}$ be the communication graph among all the agents with the leader as the root. Then, matrix $\graphm{}$ is symmetric positive definite. 
\end{lemma}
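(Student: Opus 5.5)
Symmetry is immediate: $\graph{}$ is symmetric because $a_{ij}=a_{ji}$, and $\mathcal{B}$ is diagonal, so $\graphm{}=\graph{}+\mathcal{B}$ is symmetric. For definiteness I will work with the quadratic form. For any $x\in\mathbb{R}^N$ the Laplacian satisfies the standard identity
\begin{equation}
x^{T}\graph{}x=\frac{1}{2}\sum_{i=1}^{N}\sum_{j=1}^{N}a_{ij}(x_i-x_j)^2,
\end{equation}
which follows by expanding $\Delta-A$. Adding the leader term gives
\begin{equation}
x^{T}\graphm{}x=\frac{1}{2}\sum_{i,j}a_{ij}(x_i-x_j)^2+\sum_{i=1}^{N}b_i x_i^2\ \geq 0,
\end{equation}
since all $a_{ij}\geq 0$ and $b_i\geq 0$. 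Hence $\graphm{}$ is positive semidefinite.

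It remains to rule out a nonzero $x$ with $x^T\graphm{}x=0$. Suppose such an $x$ exists. Each term in the sum is nonnegative, so each must vanish. Thus $x_i=x_j$ whenever $a_{ij}>0$, and $x_i=0$ whenever $b_i>0$.

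Now I use the hypothesis that $\hat{\mathcal{X}}$ has the leader as root. I interpret this as saying every follower is reachable from the leader. Since the follower graph is undirected, this means each connected component of the follower graph contains at least one agent $k$ with $b_k>0$. Within a component, the edge conditions propagate along paths, so $x$ is constant on that component. The pinned agent $k$ then forces that constant to be $x_k=0$. Therefore $x=0$, a contradiction. So $\graphm{}$ is positive definite.

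The only delicate step is this last one. It requires carefully translating ``leader as root'' into the statement that every follower component contains a pinned node; the remaining steps are routine algebra.
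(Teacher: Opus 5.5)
Your proof is correct: the identity $x^{T}\mathcal{M}x=\tfrac12\sum_{i,j}a_{ij}(x_i-x_j)^2+\sum_i b_i x_i^2$ gives positive semidefiniteness. For definiteness, ``leader as root'' means a path from the leader must enter the follower graph through a node with $b_k>0$, so every connected component of the follower graph contains such a pinned node, and this forces any $x$ with $x^{T}\mathcal{M}x=0$ to vanish. The paper gives no proof of its own and only cites the lemma from the literature, where this quadratic-form argument is the standard one, so your proposal matches the intended proof.
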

\subsection{On finite-time and fixed-time}

Consider the system
\begin{equation}\label{eq:x_function}
    \dot{x}(t)=f\left(x(t),t;\rho\right), \ \ \ \ x(0)=x_0
\end{equation}
where $x\in\mathbb{R}^{n}$ is the system state, the vector  $\rho \in \mathbb{R}^{b}$ stands for more parameters of system \eqref{eq:x_function} which are assumed to be constant, i.e., $\dot{\rho}=0$. Furthermore, there is no limit for the number of parameters, so $b$ can take any value in the natural number set $\mathbb{N}$. The function $f:\mathbb{R}^{n}\times \mathbb{R}_+\rightarrow \mathbb{R}^{n}$ is nonlinear and the origin is assumed to be an equilibrium point of system \eqref{eq:x_function}, so that $f(0,t;\rho)=0$. Besides, when function $f$ does not depend explicitly on $t$, the system is said to be autonomous or time-invariant. Otherwise, it is called non-autonomous or time-varying \cite{Khalil2002}.
\begin{definition}
\cite{Bhat2005} The origin of \eqref{eq:x_function} is globally finite-time stable if it is globally asymptotically stable and any solution $x(t;x_{0})$ of \eqref{eq:x_function} reaches the equilibrium point at some finite time moment i.e. $x(t;x_{0})=0, \forall t \geq T(x_{0})$ where $T:\mathbb{R}^{n}\rightarrow\mathbb{R}_{+}\cup \{0\}$ is called the settling-time function.
\end{definition}

\begin{definition}
\cite{Polyakov2016} The origin of \eqref{eq:x_function} is fixed-time stable if it is globally finite-time stable and the settling function is bounded, i.e., $\exists T_{\max}>0: T(x_{0})\leq T_{\max}, \forall x_{0} \in \mathbb{R}^{n}$.
\end{definition}

\begin{theorem}\cite{Aldana-Lopez2018} \label{th:cons_pred}
Consider the system 
\begin{equation}\label{eq:sign_pqk}
    \dot{x}=-(\alpha \abs{x}^{p}+\beta \abs{x}^{q})^{k}\sign{x},\ \ \ 
    x(0)=x_{0}
\end{equation}
with $x\in \mathbb{R}$. The parameters of the system are the real numbers $\alpha,\beta,p,q,k>0$ which satisfy the constraints $kp<1$ and $kq>1$. Let $\rho = [\alpha, \beta, p, q, k]^{T} \in \mathbb{R}^{5}$. Then, the origin $x=0$ of system  \eqref{eq:sign_pqk} is fixed-time stable and  the settling time function satisfies $T(x_0)\leq T_{f}=\gamma(\rho)$, where 
\begin{equation}\label{eq:gamma_rho}
    \gamma(\rho)=\frac{\Gamma\left(\frac{1-kp}{q-p}\right)\Gamma\left(\frac{qk-1}{q-p}\right)}{\alpha^{k}\Gamma(k)(q-p)}\left(\frac{\alpha}{\beta} \right)^{\frac{1-kp}{q-p}},
\end{equation}
and $\Gamma(\cdot)$ is the Gamma function defined as $\Gamma (z)= \int^{\infty}_{0}e^{-t}t^{z-1}dt$ (see \cite{Bateman1953} for details on the Gamma function).
\end{theorem}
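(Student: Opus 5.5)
The plan is to solve the scalar ODE by separation of variables and to bound the settling time by the integral over the whole half-line. The right-hand side of \eqref{eq:sign_pqk} is odd in $x$, so we may take $x_0>0$ without loss of generality.

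First, on any interval where $x(t)>0$, we have $\dot x=-(\alpha x^{p}+\beta x^{q})^{k}<0$. Hence $x$ is strictly decreasing for as long as it stays positive. Once it reaches $0$ it remains there, since $0$ is an equilibrium and $x(t)=0$ for all later $t$ is a solution. For uniqueness in forward time, note that $V=x^2$ satisfies $\dot V\le 0$, so trajectories cannot leave the origin.

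Second, separation of variables gives the exact settling time
\begin{equation}
T(x_0)=\int_0^{x_0}\frac{dx}{(\alpha x^{p}+\beta x^{q})^{k}}\le \int_0^{\infty}\frac{dx}{(\alpha x^{p}+\beta x^{q})^{k}}.
\end{equation}
The integrand behaves like $\alpha^{-k}x^{-kp}$ near $0$, which is integrable because $kp<1$. It behaves like $\beta^{-k}x^{-kq}$ near $\infty$, which is integrable because $kq>1$. So the bound is finite and independent of $x_0$. Together with the monotone decrease, this gives global asymptotic stability, finite-time convergence and a uniform bound, i.e. fixed-time stability.

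Third, we evaluate the integral in closed form; this is the only real computational step. Factor the integrand as
\begin{equation}
(\alpha x^{p}+\beta x^{q})^{-k}=\alpha^{-k}x^{-kp}\left(1+\tfrac{\beta}{\alpha}x^{q-p}\right)^{-k}.
\end{equation}
Then substitute $u=\tfrac{\beta}{\alpha}x^{q-p}$, so that $x=(\alpha u/\beta)^{1/(q-p)}$ and $dx=\tfrac{1}{q-p}(\alpha/\beta)^{1/(q-p)}u^{\frac{1}{q-p}-1}du$. This transforms the integral into
\begin{equation}
\frac{1}{\alpha^{k}(q-p)}\left(\frac{\alpha}{\beta}\right)^{\frac{1-kp}{q-p}}\int_0^\infty \frac{u^{\frac{1-kp}{q-p}-1}}{(1+u)^{k}}\,du.
\end{equation}
The remaining integral is the Beta function $B(a,b)$ with $a=\frac{1-kp}{q-p}$ and $a+b=k$, so that $b=\frac{kq-1}{q-p}$. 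Both $a$ and $b$ are positive by the constraints $kp<1$ and $kq>1$. Finally, $B(a,b)=\Gamma(a)\Gamma(b)/\Gamma(k)$ yields exactly $\gamma(\rho)$ in \eqref{eq:gamma_rho}.

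The main care point is tracking the exponents in this substitution correctly, in particular checking that $a+b=k$. Once that is verified, the identification with $\gamma(\rho)$ is immediate.
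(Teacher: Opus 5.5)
Your proof is correct and is essentially the standard argument. The paper gives no proof of its own and imports the result from \cite{Aldana-Lopez2018}, where the settling time is likewise written as $T(x_0)=\int_0^{|x_0|}(\alpha z^{p}+\beta z^{q})^{-k}\,dz$, bounded by the improper integral over $[0,\infty)$ (finite because $kp<1<kq$), and that integral is evaluated through the Beta function $B\bigl(\frac{1-kp}{q-p},\frac{kq-1}{q-p}\bigr)=\Gamma\bigl(\frac{1-kp}{q-p}\bigr)\Gamma\bigl(\frac{kq-1}{q-p}\bigr)/\Gamma(k)$. Your substitution $u=\frac{\beta}{\alpha}x^{q-p}$ and your exponent bookkeeping are right; the substitution implicitly uses $q>p$, which follows from $kp<1<kq$. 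Since $T(x_0)\to\gamma(\rho)$ as $|x_0|\to\infty$, your computation in fact shows that $\gamma(\rho)$ is the least upper bound of the settling time.
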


\begin{theorem}\cite{Aldana-Lopez2018} \label{thm:pred_time_stable}
Consider the system 
\begin{equation}\label{eq:system}
    \dot{x}(t)=f(x(t),t;\rho), \ \ \
    x(0)=x_{0}
\end{equation}
where $x\in \mathbb{R}^{n}$ is the system state, the vector $\rho \in \mathbb{R}^{b}$ stands for the system parameters which are assumed to be constant. The function $f:\mathbb{R}^{n}\times \mathbb{R}_{+}\rightarrow\mathbb{R}^n$ is such that $f(0,t;\rho)=0$. Assume that there exists a continuous radially unbounded function $V:\mathbb{R}^{n} \rightarrow \mathbb{R}$ such that:
\begin{align}
 V(0)&=0 \\
 V(x)&>0, \quad \forall x \in \mathbb{R}^{n} \backslash \{0\}
\end{align} 
and the derivative of $V$ along the trajectories of \eqref{eq:system} satisfies 
\begin{equation}
    \dot{V}(x)\leq-\frac{\gamma(\rho)}{T_{c}}(\alpha V(x)^{p} + \beta V(x)^{q})^{k}, \forall x \in \mathbb{R}^{n} \backslash \{0\}
\end{equation} 
where $\alpha,\beta,p,q,k >0$, $kp<1$, $kq>1$, $\gamma$ is given in \eqref{eq:gamma_rho} and $\dot{V}$ is the upper right-hand time-derivative of $V$. Then, the origin of \eqref{eq:system} is predefined-time stable with predefined-time $T_{c}$.
\end{theorem}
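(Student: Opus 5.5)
The plan is a comparison argument that rescales the scalar system of Theorem~\ref{th:cons_pred} to the predefined time $T_c$. Throughout, write $\Phi(v)=(\alpha v^{p}+\beta v^{q})^{k}$ for $v\geq 0$. This function is continuous and strictly positive for $v>0$.

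First, along any trajectory with $x(t)\neq 0$, the function $v(t)=V(x(t))$ satisfies the differential inequality
\begin{equation}
D^{+}v(t)\leq -\frac{\gamma(\rho)}{T_c}\,\Phi(v(t)).
\end{equation}
Define, for $v_0=V(x_0)$,
\begin{equation}
\tau(v)=\frac{T_c}{\gamma(\rho)}\int_{v}^{v_0}\frac{ds}{\Phi(s)},\qquad 0\leq v\leq v_0 .
\end{equation}
Since $\tau$ is decreasing and $v$ is nonincreasing, the chain rule for Dini derivatives gives $D^{+}\tau(v(t))\geq 1$ as long as $v(t)>0$. Hence $t\leq \tau(v(t))$ on that interval.

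Second, I bound $\tau(0)$ uniformly in $v_0$. The key identity is
\begin{equation}
\int_0^\infty \frac{ds}{(\alpha s^{p}+\beta s^{q})^{k}}=\gamma(\rho).
\end{equation}
This is exactly the settling time of \eqref{eq:sign_pqk} started from $x_0\to\infty$, which Theorem~\ref{th:cons_pred} shows equals $\gamma(\rho)$. It can also be computed directly: substitute $s=(\alpha/\beta)^{1/(q-p)}u$, then $w=u^{q-p}/(1+u^{q-p})$, which turns the integral into a Beta function $B\!\left(\frac{1-kp}{q-p},\frac{kq-1}{q-p}\right)/\Gamma(k)$-type expression matching \eqref{eq:gamma_rho}. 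The conditions $kp<1$ and $kq>1$ ensure convergence at $0$ and at $\infty$ respectively. Therefore $\tau(0)\leq \frac{T_c}{\gamma(\rho)}\gamma(\rho)=T_c$. Consequently $v(t)$ must reach $0$ at some $t\leq T_c$; otherwise $t\leq\tau(v(t))\leq T_c$ would fail for $t>T_c$.

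Third, once $V=0$, the state is at the origin. Because $\dot V\leq 0$ and $V\geq 0$, the trajectory stays there, and $f(0,t;\rho)=0$ is consistent with this. Lyapunov stability and global attractivity follow from positive definiteness and radial unboundedness of $V$ together with $\dot V<0$ off the origin. This yields fixed-time stability with settling time bounded by $T_c$, i.e.\ predefined-time stability.

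The main obstacle is the rigorous comparison step with the upper right-hand Dini derivative for a merely continuous $V$. I would handle it with the standard comparison lemma, comparing $v(t)$ against the solution of $\dot y=-\frac{\gamma}{T_c}\Phi(y)$, $y(0)=v_0$. By the time scaling $t\mapsto \gamma t/T_c$ and Theorem~\ref{th:cons_pred}, this solution reaches zero before $T_c$. Uniqueness holds away from $0$ because $\Phi$ is locally Lipschitz on $(0,\infty)$, which is all the comparison argument requires before the hitting time.
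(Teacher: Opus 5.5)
Your proposal is correct and follows essentially the same route as the source: the paper gives no proof of this result and imports it from \cite{Aldana-Lopez2018}, where it rests on the same comparison of $V(x(t))$ with the time-scaled scalar system of Theorem~\ref{th:cons_pred}, i.e., on the bound $\frac{T_c}{\gamma(\rho)}\int_0^{V(x_0)}\frac{ds}{(\alpha s^{p}+\beta s^{q})^{k}}$ on the settling time together with the identity $\int_0^\infty \frac{ds}{(\alpha s^{p}+\beta s^{q})^{k}}=\gamma(\rho)$, which your Beta-function substitution verifies. Your Dini-derivative step is also sound: since $\tau'<0$, $D^{+}(\tau\circ v)=\tau'(v)\,D_{+}v$, where $D_{+}v$ is the lower right Dini derivative, and $D_{+}v\leq D^{+}v$ then gives $D^{+}(\tau\circ v)\geq 1$.
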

\begin{definition}\label{def:}
For any  real number $r$, the function  $x\mapsto\barpow{x}^{r}$ is defined as $\lfloor x \rceil^{r}= |x|^{r}\sign{x}$ for any $x\in \mathbb{R}$ if $r>0$, and for any $x\in\mathbb{R} \setminus 0$ if $r\leq0$. Moreover, if $r>0$,$\barpow{0}^{r}=0.$
\end{definition}

\section{Problem statement}\label{sec:problem_statement}

Let us consider a group of $N+1$ agents with one leader and $N$ followers labeled $0$ and $i \in \{1,\ldots,N\}$, respectively. The dynamics of the leader is described by
\begin{equation}
    \begin{array}{lll}
    \dot x_{0}(t)&=&v_0(t) \\
    \dot v_0(t)&=&u_0(t)
    \end{array}
\end{equation}
where $X_0=[x_0,v_0]^T\in\mathbb{R}^2$ is the state of the leader and $u_0\in\mathbb{R}$ is the control input of the leader, which is assumed to satisfy $|u_0(t)|\leq u_0^{max}, \forall t\geq 0$ with $u_0^{max}$ a known constant. The dynamics of the $i-$th follower agent is given by:
\begin{align}
       \dot x_{i}(t)&=v_i(t) \\
    \dot v_i(t)&=u_i(t)+\Delta_i(t)\label{eq:din_agent_i} 
\end{align}
where $X_i=[x_i,v_i]^T\in\mathbb{R}^2$ is the state of agent $i$, $u_i\in\mathbb{R}$ is the control input of agent $i$ and $\Delta_i$ is an unknown external disturbance which is assumed to satisfy $|\Delta_i(t)|\leq \delta _i, \forall t\geq 0$ with $\delta _i$ a known constant. Besides, each agent estimates the leader states, represented by $\hat{x}_{i}$ (position) and $\hat{v}_{i}$ (velocity). The communication topology is represented by an undirected graph, which is assumed to contain a spanning tree with the leader agent as the root. The $i-$th agent shares its estimated states of the leader with its neighbors, defined by the neighbor set $\mathcal{N}_i$.

The control objective is to design a distributed control $u_i$ such that the consensus is achieved in a fixed-time $T_{c}$, i.e.
\begin{equation}
    \left\{
    \begin{array}{l}
        \lim_{t\rightarrow T_c}\|X_i(t)-X_0(t)\| =0 \\
        X_i(t) = X_0(t), \qquad \forall t>T_{c}.
        \end{array}\right.
\end{equation}

This goal is achieved into two stages. An ``observer", based on consensus algorithms, allows each agent to obtain an estimate of the leader state in a distributed manner in a fixed-time. Then, after the observer converges, a controller drives the state of the agent towards the state trajectory of the leader. Two protocols are investigated hereafter. In the first one, known as an autonomous protocol,  we guarantee that each agent is driven towards the leader state in a fixed-time, where the Upper Bound of the Settling-Time (UBST) is specified a priory by the user. In the second one, known as a non-autonomous protocol, we redesign the previous one by adding time-varying gains to obtain a less conservative estimate of the \textit{UBST} while guaranteeing that the time-varying gains remain bounded. 

\section{Fixed-time leader-follower consensus using autonomous protocols}
\label{Sec:AutConsensus}

\subsection{Distributed fixed-time observer}

Since the leader state is not available to all followers, for each agent, an observer is designed to estimate the state of the leader in a fixed-time. The observer has the following structure:
\begin{equation} \label{eq:observer_dynamics}
\begin{array}{lll}
    \dot{\hat{x}}_{i}&=&\hat{v}_i - \kappa_{i,x}\left[(\alpha|e_{1,i}|^{p} + \beta|e_{1,i}|^{q})^k+\zeta_x\right] \sign{e_{1,i}}\\
    \dot{\hat{v}}_i &=& -\kappa_{i,v}\left[(\alpha|e_{2,i}|^{p} + \beta|e_{2,i}|^{q})^k+\zeta_v\right] \sign{e_{2,i}}
    \end{array}
\end{equation}
with $ e_{1,i}=\sum_{j\in\mathcal{N}_i} a_{ij} (\hat{x}_j(t)-\hat{x}_i(t))+b_i(x_0(t)-\hat{x}_i(t))$ and $ e_{2,i}=\sum_{j\in\mathcal{N}_i} a_{ij} (\hat{v}_j(t)-\hat{v}_i(t))+b_i(v_0(t)-\hat{v}_i(t))$, $\hat{x}_{i}$ (resp. $\hat{v}_{i}$) is the estimate of the leader position (resp. velocity) for the $i$-th follower. $\kappa_{i,x}$, $\kappa_{i,v}$, $\alpha$, $\beta$, $k$, $p$, $q$, $\zeta_x$ and $\zeta_v$ are positive constants to be defined later. 
For each agent, let us denote the observer errors as
\begin{equation}
    \begin{array}{lll}
    \tilde x_i&=& \hat{x}_i - x_0\\
    \tilde v_i&=& \hat{v}_i - v_0.
    \end{array}\label{eq:error_obs_leader}
\end{equation}
Therefore, the observation error dynamics can be expressed as:
\begin{equation}
\begin{array}{lll}
    \dot{\tilde{x}}_{i}&=&\tilde{v}_i - \kappa_{i,x}\left[(\alpha|e_{1,i}|^{p} + \beta|e_{1,i}|^{q})^k+\zeta_{x}\right] \sign{e_{1,i}}\\
    \dot{\tilde{v}}_i &=& -\kappa_{i,v}\left[(\alpha|e_{2,i}|^{p} + \beta|e_{2,i}|^{q})^k+\zeta_{v}\right] \sign{e_{2,i}}-u_{0}
    \end{array}\label{eq:x_tilde}
\end{equation}
with $ e_{1,i}=\sum_{j\in\mathcal{N}_i} a_{ij} (\tilde{x}_j(t)-\tilde{x}_i(t))-b_i \tilde{x}_i(t))$ and $ e_{2,i}=\sum_{j\in\mathcal{N}_i} a_{ij} (\tilde{v}_j(t)-\tilde{v}_i(t))-b_i\tilde{v}_i(t))$.

In a compact form, with $\tilde{x}=[\tilde{x}_1 \ \cdots \ \tilde{x}_N]^T\in\mathbb{R}^N$ and $\tilde{v}=[\tilde{v}_1 \ \cdots \ \tilde{v}_N]^T\in\mathbb{R}^N$,
system \eqref{eq:x_tilde} can be written as:
\begin{equation}\label{eq:compact_form}
    \begin{array}{lll}
    \dot{\tilde{x}} &=& \tilde{v} -\Phi_x\left(\graphm{}\tilde{x}\right)\\
    \dot{\tilde{v}} &=& -\Phi_v\left(\graphm{}\tilde{v}\right)- \boldsymbol{1} u_0
    \end{array}
\end{equation}
where $\graphm{}$ represents the connection matrix of the graph describing the network between the followers and the leader, and for $z=[z_1 \ \cdots \ z_N]^T\in\mathbb{R}^N$, the functions $\Phi_x:\mathbb{R}^N\rightarrow \mathbb{R}^N$ and $\Phi_v:\mathbb{R}^N\rightarrow \mathbb{R}^N$ are defined as
\begin{align}
\Phi_x(z)=
\left[
\begin{array}{c}
\kappa_{1,x}\left[(\alpha|z_1|^{p} + \beta|z_1|^{q})^k+\zeta_x\right] \sign{z_1} \\ \vdots \\ \kappa_{N,x}\left[(\alpha|z_N|^{p} + \beta|z_N|^{q})^k+\zeta_x\right] \sign{z_N}
\end{array}
\right],\\
\Phi_{v}(z)=
\left[
\begin{array}{c}
\kappa_{1,v}\left[(\alpha|z_1|^{p} + \beta|z_1|^{q})^k+\zeta_v\right] \sign{z_1} \\ \vdots \\ \kappa_{N,v}\left[(\alpha|z_N|^{p} + \beta|z_N|^{q})^k+\zeta_v\right] \sign{z_N}
\end{array}
\right].
\end{align}

\begin{theorem}\label{thm:predefined_time_auto}
If the observer parameters are selected as $\alpha, \beta, p, q, k > 0$, $kp<1$, $kq>1$, $\zeta_x \geq 0$,
\begin{align}
\kappa_{x}\geq \frac{N\gamma(\rho)}{\lambda_{\min}(\graphm{})T_{c_{2}}}\text{, }
    \kappa_{v}\geq \frac{N\gamma(\rho)}{\lambda_{\min}(\graphm{})T_{c_{1}}}\text{ and } \kappa_{v}\zeta_{v}\geq u_0^{max}
\end{align}
where $$\kappa_{x}= \min_{i\in\{1\dots N\}} \kappa_{i,x} \text{ and } \kappa_{v}= \min_{i\in \{1\dots N\}} \kappa_{i,v}$$ and $\gamma(\rho)$ is defined in Equation \eqref{eq:gamma_rho}, then under the distributed observer \eqref{eq:observer_dynamics}, the observer error dynamics \eqref{eq:x_tilde} is fixed-time stable with a predefined-time $T_{o}=T_{c_1}+T_{c_2}$.
\end{theorem}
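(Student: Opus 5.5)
The proof exploits the cascade structure of \eqref{eq:compact_form}. The velocity error $\tilde v$ evolves independently of $\tilde x$, so we first show that $\tilde v$ reaches zero before $T_{c_1}$. After that instant the position subsystem reduces to $\dot{\tilde x}=-\Phi_x(\graphm{}\tilde x)$, and we show it settles within a further $T_{c_2}$. Both steps use Theorem~\ref{thm:pred_time_stable} with a Lyapunov function that matches the sign-type nonlinearity. The natural choice is $V=\|\graphm{}z\|_1=\sum_i|e_i|$, with $z=\tilde v$ or $z=\tilde x$. Because $\graphm{}$ is symmetric positive definite by the lemma of \cite{Ren2013,Ni2010}, this $V$ is positive definite and radially unbounded.

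\emph{Velocity stage.} Set $e_2=-\graphm{}\tilde v$ and $V_2=\|e_2\|_1$. Its upper right-hand derivative is
\[
\dot V_2=\sign{e_2}^T\graphm{}\Big(\Phi_v(\graphm{}\tilde v)+\boldsymbol 1u_0\Big),
\]
with the usual Filippov treatment when some $e_{2,i}=0$. Note that $\Phi_v(\graphm{}\tilde v)=-\Phi_v(e_2)$, since each component of $\Phi_v$ is odd. Write $\Phi_v(e_2)=K_v\,\psi(e_2)\odot\sign{e_2}$ with $\psi_i\ge\zeta_v$. Then $\dot V_2=-s^T\graphm{}K_v\big(\psi\odot s\big)+s^T\graphm{}\boldsymbol 1u_0$, where $s=\sign{e_2}$.

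The key inequality we aim for is
\[
s^T\graphm{}D s\ \ge\ \frac{\lambda_{\min}(\graphm{})}{N}\ \max_i d_i
\]
for any diagonal $D=\mathrm{diag}(d_i)\ge0$ with $d_i=0$ whenever $s_i=0$. The intended argument is $s^T\graphm{}Ds\ge\lambda_{\min}\|s\|^2\cdots$. This step is the main obstacle, because $\graphm{}D$ is not symmetric. We would try to handle it by exploiting the structure $\graphm{}=\graph{}+\mathcal B$: the diagonal entries dominate (weak diagonal dominance, with strict dominance at the rows connected to the leader). Pairing terms edge by edge would show that the quantity is at least the contribution of the largest $d_i$ scaled by $\lambda_{\min}/N$. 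If that fails, we would instead use $V=\tilde v^T\graphm{}\tilde v$-type arguments, losing constants.

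The $\zeta_v$ term must also dominate the leader input. Here we would use $\kappa_v\zeta_v\ge u_0^{max}$ to absorb $s^T\graphm{}\boldsymbol 1u_0$, arguing that it is bounded by the $\zeta_v$ part of the same quadratic form. Since $\graphm{}\boldsymbol 1=\mathcal B\boldsymbol 1$, the cross term reduces to $u_0\,s^T b$, and a careful sign analysis should give cancellation. With the function $\phi(r)=(\alpha r^p+\beta r^q)^k$ increasing, we have $\max_i\phi(|e_{2,i}|)\ge\phi(\|e_2\|_\infty)\ge\phi(V_2/N)$. One more use of $N$ (for instance by taking $V_2=\|e_2\|_\infty$ instead) then yields
\[
\dot V_2\le-\frac{\kappa_v\lambda_{\min}}{N}\big(\alpha V_2^p+\beta V_2^q\big)^k\le-\frac{\gamma(\rho)}{T_{c_1}}\big(\alpha V_2^p+\beta V_2^q\big)^k,
\]
by the assumed bound on $\kappa_v$. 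Theorem~\ref{thm:pred_time_stable} then gives $\tilde v(t)=0$ for all $t\ge T_{c_1}$.

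\emph{Position stage.} Before $T_{c_1}$ the states remain bounded: $\tilde v$ is bounded because $V_2$ is nonincreasing, and the resulting $\tilde x$ dynamics has bounded right-hand side growth, so there is no finite escape. For $t\ge T_{c_1}$ the dynamics is $\dot{\tilde x}=-\Phi_x(\graphm{}\tilde x)$. Repeating the same computation with $V_1=\|\graphm{}\tilde x\|$ (no disturbance term now, so $\zeta_x\ge0$ suffices) gives the predefined-time inequality with $T_{c_2}$. Hence $\tilde x=0$ for $t\ge T_{c_1}+T_{c_2}=T_o$, uniformly in the initial conditions, which establishes fixed-time stability with predefined time $T_o$.
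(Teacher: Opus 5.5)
\textbf{There is a genuine gap, at exactly the step you flag as the main obstacle.} Write $M=\graphm{}$ and $\phi(r)=(\alpha r^p+\beta r^q)^k$.

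The inequality $s^TMDs\ge\frac{\lambda_{\min}(M)}{N}\max_i d_i$ is false. In fact $V_2=\|M\tilde v\|_1$ cannot satisfy any inequality of the form $\dot V_2\le -c\,\phi(V_2)$. Row by row, for $s_j\neq 0$ one has $s_j(Ms)_j=b_j+\sum_k a_{jk}(1-s_js_k)$. Edge-by-edge pairing therefore gives only $s^TMDs\ge\sum_j b_j d_j$, which involves only the agents attached to the leader. That bound is enough to absorb $u_0\,s^Tb$ under $\kappa_v\zeta_v\ge u_0^{max}$, but it does not produce decay.

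Concretely, take the path leader--1--2 with unit weights, so $M=\begin{pmatrix}2&-1\\-1&1\end{pmatrix}$. Use equal gains, $u_0=0$, and the state with $e_2=(\epsilon,R)^T$. Then $s=(1,1)^T$ and $Ms=(1,0)^T$, so $\dot V_2=-d_1=-\kappa_v(\phi(\epsilon)+\zeta_v)$. This stays bounded as $R\to\infty$, while $V_2=\epsilon+R\to\infty$. In the position stage with $\zeta_x=0$ the rate is $\kappa_x\phi(\epsilon)$, which even tends to $0$.

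Your alternative $\|e_2\|_\infty$ fails too. At $e_2=(R,R)^T$ one gets $\dot e_2=-M\Phi_v(e_2)=(-c,0)^T$ with $c=\kappa_v(\phi(R)+\zeta_v)$, so $D^+\|e_2\|_\infty=0$. The obstruction is structural: $M\Phi_v(\cdot)$ is not componentwise sign-aligned with its argument. Any Lyapunov function built from the $|e_{2,i}|$ therefore runs into the non-symmetric product $MD$.

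\textbf{The repair is the fallback you mention only in passing, and it loses nothing in constants; this is the paper's route.} Take $V=\frac1N\sqrt{\lambda_{\min}(M)\,\tilde v^TM\tilde v}$ and set $y=M\tilde v$. Its derivative involves
$\tilde v^TM\dot{\tilde v}=y^T\dot{\tilde v}=-\sum_i\kappa_{i,v}|y_i|\big(\phi(|y_i|)+\zeta_v\big)-u_0\boldsymbol{1}^Ty$.
The coupling through $M$ disappears, and everything becomes a sum of componentwise products. Three estimates then finish the velocity stage:
\begin{itemize}
\item The $\zeta_v$ part dominates the leader input directly, since $\zeta_v\sum_i\kappa_{i,v}|y_i|\ge\kappa_v\zeta_v\|y\|_1\ge u_0^{max}\|y\|_1\ge|u_0\boldsymbol{1}^Ty|$.
\item The remaining sum is at least $\kappa_v\|y\|_1\phi(\|y\|_1/N)$, by the Jensen-type Lemma~\ref{lem:sum_expo}.
\item One has $\|y\|_1\ge\|y\|_2=\sqrt{\tilde v^TM^2\tilde v}\ge\sqrt{\lambda_{\min}(M)\tilde v^TM\tilde v}=NV$.
\end{itemize}
Together these give $\dot V\le-\frac{\kappa_v\lambda_{\min}(M)}{N}(\alpha V^p+\beta V^q)^k$. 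This is exactly what the hypothesis on $\kappa_v$ converts into the predefined-time inequality with $T_{c_1}$. The position stage is identical, using $\zeta_x\ge 0$ and $T_{c_2}$.

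\textbf{A minor point on the position stage.} Your no-finite-escape argument for $\tilde x$ on $[0,T_{c_1}]$ reaches the right conclusion for the wrong reason. $\Phi_x$ grows like $|y|^{kq}$ with $kq>1$, so the right-hand side does not have bounded growth. What rules out escape is dissipativity: $\frac{d}{dt}\sqrt{\tilde x^TM\tilde x}\le\sqrt{\tilde v^TM\tilde v}$, and the right-hand side is bounded and nonincreasing.
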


\begin{proof}
Consider the radially unbounded Lyapunov function candidate
\begin{equation}
 V_1(\tilde{v})=\frac{1}{N}\sqrt{\lambda_{\min}\left( \graphm{} \right) \tilde{v}^{T}\graphm{} \tilde{v} }.\label{eq:lyapunov_proposed}   
\end{equation}

Its time-derivative along the trajectories of system \eqref{eq:compact_form} is
\begin{equation}\label{eq:lyapunov_derivative}
\dot V_1=\frac{\sqrt{\lambda_{\min}\left(\graphm{}\right)}}{N}\frac{\tilde{v}^{T}\graphm{} \dot{\tilde{v}}}{\sqrt{\tilde{v}^{T}\graphm{} \tilde{v}}}
\end{equation}

Let us denote $e_{2}=\graphm{}\tilde{v}=[e_{2,1}\ \cdots \ e_{2,N}]^T$. Then, Equation \eqref{eq:lyapunov_derivative} can be written as follows
\begin{align} 
\dot{V}_1&=\frac{\sqrt{\lambda_{\min}\left(\graphm{}\right)}}{N}\frac{e_{2}^{T}}{\sqrt{\tilde{v}^{T}\graphm{} \tilde{v}}}\left( -\Phi_v\left(\graphm{}\tilde{v}\right)-\boldsymbol{1}u_{0}\right)\\
&
\begin{multlined}
=\frac{\sqrt{\lambda_{\min}\left(\graphm{}\right)}}{N}\left(-\frac{1}{\sqrt{\tilde{v}^{T}\graphm{} \tilde{v}}} \sum_{i=1}^{N}\kappa_{i,v}e_{2,i}\left[(\alpha|e_{2,i}|^{p} + \beta|e_{2,i}|^{q})^k+\right.\right.\\\left.\left.\cdots+\zeta_{v}\right]\sign{e_{2,i}}-\frac{e_{2}^{T}\boldsymbol{1}u_{0}}{\sqrt{\tilde{v}^{T}\graphm{} \tilde{v}}}\right)
\end{multlined}
\\
&
\begin{multlined}
=\frac{\sqrt{\lambda_{\min}\left(\graphm{}\right)}}{N}\left(-\frac{1}{\sqrt{\tilde{v}^{T}\graphm{} \tilde{v}}} \sum_{i=1}^{N}\kappa_{i,v}\abs{e_{2,i}}(\alpha|e_{2,i}|^{p} + \beta|e_{2,i}|^{q})^{k}\right. \\\left.-\frac{\zeta_{v}}{\sqrt{\tilde{v}^{T}\graphm{} \tilde{v}}}\sum_{i=1}^{N}\kappa_{i,v}\abs{e_{2,i}}-\frac{e_{2}^{T} \boldsymbol{1}u_{0}}{\sqrt{\tilde{v}^{T}\graphm{} \tilde{v}}}\right)\label{eq:lyap_deni_der}
\end{multlined}
\end{align}

Now, using the inequality \eqref{eq:lemma_sum_expo} of Lemma \ref{lem:sum_expo} in Appendix, the first term of Equation \eqref{eq:lyap_deni_der} can be written as 
\begin{multline}
 \frac{1}{\sqrt{\tilde{v}^{T}\graphm{} \tilde{v}}} \sum_{i=1}^{N}\kappa_{i,v}\abs{e_{2,i}}(\alpha|e_{2,i}|^{p} + \beta|e_{2,i}|^{q})^{k} 
 \geq\\\frac{N\kappa_{v}}{\sqrt{\tilde{v}^{T}\graphm{} \tilde{v}}}\left(\frac{1}{N}\sum_{i=1}^{n}\abs{e_{2,i}}\right)\left(\alpha \left(\frac{1}{N}\sum_{i=1}^{N}\abs{e_{2,i}}\right)^{p}+ \beta \left(\frac{1}{N}\sum_{i=1}^{N}\abs{e_{2,i}}\right) ^{q} \right)^{k}
 \end{multline}
with $\kappa_{v}=\min\{\kappa_{1,v},\dots,\kappa_{N,v}\}$. Since $\| e_{2}\|_{1}= \sum_{i}^{N} \abs{e_{2,i}}$ the last expression can be written as 
\begin{multline}
 \frac{1}{\sqrt{\tilde{v}^{T}\graphm{} \tilde{v}}} \sum_{i=1}^{N}\kappa_{i,v}\abs{e_{2,i}}(\alpha|e_{2,i}|^{p} + \beta|e_{2,i}|^{q})^{k} \geq\\ \frac{N\kappa_{v}}{\sqrt{\tilde{v}^{T}\graphm{} \tilde{v}}} \left(\frac{1}{N} \|e_{2}\|_1\right) \left(\alpha \left(\frac{1}{N} \| e_{2}\|_1\right)^{p}+ \beta \left(\frac{1}{N}  \| e_{2}\|_1 \right) ^{q} \right)^{k}.
\end{multline}
Furthermore, from Lemma \ref{lem:sum_abs} in Appendix, one gets
\begin{align}
 \|e_{2}\|_{1}\geq \| e_{2} \|_{2} = \sqrt{e_{2}^{T}e_{2}}= \sqrt{\tilde{v}^{T}\graphm{}^{2}\tilde{v}} \geq \sqrt{\lambda_{\min}\left(\graphm{}\right)\tilde{v}^{T}\graphm{}\tilde{v}}.
\end{align}{}
Hence, 
\begin{multline*}
    \frac{1}{\sqrt{\tilde{v}^{T}\graphm{} \tilde{v}}} \sum_{i=1}^{N}\kappa_{i,v}\abs{e_{2,i}}(\alpha|e_{2,i}|^{p} + \beta|e_{2,i}|^{q})^{k} 
 \geq \\ \frac{\kappa_{v} N}{\sqrt{\tilde{v}^{T}\graphm{} \tilde{v}}}  V  \left(\alpha  V^{p}+ \beta V  ^{q} \right)^{k}=\kappa_{v}\sqrt{\lambda_{\min}\left(\graphm{}\right)} \left(\alpha V_1^{p}+ \beta V_1  ^{q} \right)^{k}.\label{eq:lyap_der_term_1} 
\end{multline*}

Now, for the last two terms of Equation \eqref{eq:lyap_deni_der}, one can obtain
\begin{multline*}
 -\frac{\zeta_{v}}{\sqrt{\tilde{v}^{T}\graphm{} \tilde{v}}}\sum_{i=1}^{N}k_{i,v}\abs{e_{2,i}}-\frac{e_{2}^{T} \boldsymbol{1}u_{0}}{\sqrt{\tilde{v}^{T}\graphm{} \tilde{v}}}
\leq \\ -\frac{\|e_{2}\|_1}{\sqrt{\tilde{v}^{T}\graphm{}\tilde{v}}} \left(\kappa_{v} \zeta_{v}-u_0^{max} \right)\leq 0.
\end{multline*}
Therefore, the following inequality can be obtained
\begin{equation}
    \dot V_1\leq- \frac{\kappa_{v}\lambda_{\min}\left(\graphm{}\right)}{N} \left(\alpha V_1^{p}+ \beta V_1  ^{q} \right)^{k}.
\end{equation}
From Theorem \ref{thm:pred_time_stable}, the observation error in velocity $\tilde{v}$ converges to the origin in a fixed-time before the predefined-time $T_{c_1}$ where $\gamma(\rho)$ is given by Eq. \eqref{eq:gamma_rho}.

Once the observation error in velocity $\tilde{v}$ converges to zero (i.e. after time $T_{c_1}$), the observation error dynamics in position can be reduced to
\begin{equation}
    \dot{\tilde{x}}_{i}=-\kappa_{i,x}\left[(\alpha|e_{1,i}|^{p} + \beta|e_{1,i}|^{q})^k+\zeta_x\right]\sign{e_{1,i}}.
\end{equation}{}
Similarly to the previous analysis, one can easily show that 
\begin{equation}
 V_2(\tilde{x})=\frac{1}{N}\sqrt{\lambda_{\min}\left( \graphm{} \right) \tilde{x}^{T} \graphm{}\tilde{x} }\label{eq:lyapunov_proposed2}   
\end{equation}
satisfies
\begin{equation}
\dot V_2(\tilde{x})\leq- \frac{\kappa_{x}\lambda_{\min}\left(\graphm{}\right)}{N} \left(\alpha V_2^{p}+ \beta V_2  ^{q} \right)^{k}
 \label{equ:lyap_x_observer}, \forall t\geq T_{c_2}.
\end{equation}

From Theorem \ref{thm:pred_time_stable}, the observation error in position $\tilde{x}$ converges to the origin in a fixed-time before the predefined-time $T_{c_2}$.

Therefore, the proposed distributed observer guarantees the estimation of the leader states in a fixed-time before the predefined-time $T_{o}=T_{c_{1}}+T_{c_{2}}$.
\end{proof}

\subsection{A Fixed-time tracking controller}
After time $T_o$, each agent has an accurate estimation of the leader state. For each agent, the tracking error is defined as
\begin{equation}\label{eq:errors2}
    \begin{array}{lll}
    e_{x,i}&=&x_i - \hat{x}_i \\
    e_{v,i}&=&v_i - \hat{v}_i,
    \end{array}
\end{equation}
or, equivalently, after the convergence of the observation error:
\begin{equation}\label{eq:errors}
    \begin{array}{lll}
    e_{x,i}&=&x_i - x_0 \\
    e_{v,i}&=&v_i - v_0.
    \end{array}
\end{equation}

Its dynamics can be expressed as:
\begin{equation}\label{eq:sys_2dn_order}
\begin{array}{lll}
    \dot e_{x,i} &=& e_{v,i}\\
    \dot e_{v,i} &=& u_i+\Delta_i - u_0.
    \end{array}
\end{equation}

Here, the objective is to design the control input $u_i$ such that the origin $(e_{x,i},e_{v,i})=(0,0)$ is fixed-time stable where the Upper Bound of the Settling-Time (\textit{UBST}) is set a priory by the user, in spite of the unknown but bounded perturbation term $\Delta_i-u_0$. Herefater, we present the following results motivated by the work \cite{Aldana-Lopez2018}.

\begin{theorem}\label{thm:predefined_time_cont_auto}
If for each agent, the controller is selected as 
\begin{equation}\label{eq:uso}
u_i=\upsilon(e_{x,i},e_{v,i})=-\left[\frac{\gamma_2}{\hat{T}_{c_2}}\left(\alpha_2\abs{\sigma_i}^{p'}+\beta_2\abs{\sigma_i}^{q'}\right)^{k'}+\frac{\gamma_1^2}{2\hat{T}_{c_1}^2}\left(\alpha_1+3\beta_1e_{x,i}^2\right)+\zeta_i(t)\right]\sign{\sigma_i}
\end{equation}
with the following sliding variable
\begin{equation}\label{eq:sigmaso}
\sigma_i=e_{v,i}+\barpow{\barpow{e_{v,i}}^2+\frac{\gamma_1^2}{\hat{T}_{c_1}^2}\left(\alpha_1\barpow{e_{x,i}}^1+\beta_1\barpow{e_{x,i}}^3\right)}^{1/2},
\end{equation}
where parameters are selected as $\alpha_1,\alpha_2,\beta_1,\beta_2,T_o',\hat{T}_{c_1},\hat{T}_{c_2}>0$, $p', q', k' > 0$, $k'p'<1$, $k'q'>1$, $\zeta_i \geq u_0^{max}+\delta_i$, $\gamma_1=\frac{\Gamma \left(\frac{1}{4}\right)^2 }{2\alpha_1^{1/2}\Gamma\left(\frac{1}{2}\right)}\left(\frac{\alpha_1}{\beta_1}\right)^{1/4}$ and $\gamma_2 = \frac{\Gamma(m_p)\Gamma(m_q)}{\alpha_2^{k'}\Gamma(k')(q'-p')}\left(\frac{\alpha_2}{\beta_2}\right)^{m_p}$ with $m_p=\frac{1-k'p'}{q'-p'}$ and $m_q=\frac{k'q'-1}{q'-p'}$, then the leader-follower consensus is achieved in a predefined-time $\hat{T}_c=
T'_{o}+\hat{T}_{c_1}+\hat{T}_{c_2}$.
\end{theorem}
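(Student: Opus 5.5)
The proof splits into three phases matching $\hat{T}_c=T'_o+\hat{T}_{c_1}+\hat{T}_{c_2}$. The constant $T'_o$ is the observer convergence time of Theorem~\ref{thm:predefined_time_auto}, i.e. the observer gains are assumed tuned so that $T_o\le T'_o$. After $t=T'_o$ we have $\hat x_i=x_0$ and $\hat v_i=v_0$, so the tracking errors \eqref{eq:errors2} coincide with \eqref{eq:errors} and obey \eqref{eq:sys_2dn_order} with lumped perturbation $\Delta_i-u_0$, $|\Delta_i-u_0|\le\delta_i+u_0^{max}\le\zeta_i$. On $[0,T'_o]$ I still need to exclude finite escape: the controller is bounded by a polynomial in $(e_{x,i},e_{v,i})$, so I would argue that solutions exist on $[0,T'_o]$ and that the state at $T'_o$ is some finite (arbitrary) initial condition. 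Since the subsequent bound is uniform in initial conditions, this is all that is needed.

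\textbf{Reaching phase.} Take $V=|\sigma_i|$ (or $\tfrac12\sigma_i^2$). Write $\phi(e_x)=\frac{\gamma_1^2}{\hat T_{c_1}^2}(\alpha_1\barpow{e_x}^1+\beta_1\barpow{e_x}^3)$ and $w=\barpow{e_v}^2+\phi(e_x)$, so that $\sigma=e_v+\barpow{w}^{1/2}$. Away from $w=0$ we get $\dot\sigma=\dot e_v\bigl(1+\frac{|e_v|}{|w|^{1/2}}\bigr)+\frac{\phi'(e_x)e_v}{2|w|^{1/2}}$, with $\phi'=\frac{\gamma_1^2}{\hat T_{c_1}^2}(\alpha_1+3\beta_1e_x^2)$. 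The factor $1+|e_v|/|w|^{1/2}\ge1$ multiplies the input, and the drift term is dominated by the compensating term $\frac{\gamma_1^2}{2\hat T_{c_1}^2}(\alpha_1+3\beta_1e_x^2)$ in \eqref{eq:uso}, since $|e_v|/|w|^{1/2}\le 1+|e_v|/|w|^{1/2}$ on the relevant set. This is the reason that term appears in $u_i$. The plan is to show $\sigma_i\dot\sigma_i\le-|\sigma_i|\frac{\gamma_2}{\hat T_{c_2}}(\alpha_2|\sigma_i|^{p'}+\beta_2|\sigma_i|^{q'})^{k'}$, where $\zeta_i$ absorbs the perturbation, and then to invoke Theorem~\ref{thm:pred_time_stable}, noting that $\gamma_2=\gamma(\rho)$ with $\rho=(\alpha_2,\beta_2,p',q',k')$. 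This gives $\sigma_i=0$ within $\hat T_{c_2}$.

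The main obstacle is the sign bookkeeping in this derivative: $e_v$ and $\sigma$ may have opposite signs, and the points $w=0$ are nondifferentiable. I would handle this by a case split on $\sign{e_v}$ versus $\sign{\sigma}$, using a Filippov or upper-right Dini derivative argument at $w=0$.

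\textbf{Sliding phase.} On $\sigma_i=0$ we have $e_v=-\barpow{w}^{1/2}$, hence $e_v^2=|w|$ with $\sign{e_v}=-\sign{w}$. Solving gives $\barpow{e_v}^2=-\phi(e_x)/2$, i.e. $\dot e_x=-\frac{\gamma_1}{\sqrt2\,\hat T_{c_1}}\barpow{\alpha_1\barpow{e_x}^1+\beta_1\barpow{e_x}^3}^{1/2}$, which is of the form \eqref{eq:sign_pqk} with $p=1,q=3,k=1/2$. Checking the constant against the given $\gamma_1$ (up to the $\sqrt2$ normalization, which I would verify) and applying Theorem~\ref{th:cons_pred} yields $e_x=0$ within $\hat T_{c_1}$. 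Then $e_v=0$ as well, and the sliding mode is maintained afterwards.

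Summing the three phases gives consensus by $T'_o+\hat T_{c_2}+\hat T_{c_1}$.
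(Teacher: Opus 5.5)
\textbf{Overall route and reaching phase.} Your overall route is the paper's. You wait for the observer ($T'_o\ge T_o$), then run a reaching phase with $V=\abs{\sigma_i}$ and Theorem~\ref{thm:pred_time_stable} (with $\gamma_2=\gamma(\alpha_2,\beta_2,p',q',k')$), then close a sliding phase with Theorem~\ref{th:cons_pred}. Your reaching-phase reasoning matches the paper's inequality and is fine: the term $\frac{\gamma_1^2}{2\hat T_{c_1}^2}(\alpha_1+3\beta_1e_{x,i}^2)$ dominates the drift $\phi'(e_{x,i})e_{v,i}/(2\abs{w}^{1/2})$, and $\zeta_i$ absorbs $\Delta_i-u_0$.

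\textbf{The gap: the sliding phase.} The gap is exactly the step you deferred. Your computation $\barpow{e_{v,i}}^2=-\phi(e_{x,i})/2$ on $\sigma_i=0$ is correct. So the reduced dynamics are $\dot e_{x,i}=-\frac{\gamma_1}{\sqrt2\,\hat T_{c_1}}(\alpha_1\abs{e_{x,i}}+\beta_1\abs{e_{x,i}}^3)^{1/2}\sign{e_{x,i}}$. After rescaling time, Theorem~\ref{th:cons_pred} then gives $e_{x,i}=0$ only within $\sqrt2\,\hat T_{c_1}$, not within $\hat T_{c_1}$.

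The factor cannot be ``verified away'', because $\gamma_1=\int_0^\infty(\alpha_1z+\beta_1z^3)^{-1/2}dz$ exactly, so the bound is tight. Concretely, take an agent with:
\begin{itemize}
\item observer error zero at $t=0$;
\item $\Delta_i-u_0\equiv0$;
\item $\sigma_i(0)=0$ and $\abs{e_{x,i}(0)}$ large.
\end{itemize}
Your reaching inequality makes $\abs{\sigma_i}$ non-increasing, so the trajectory stays on the surface. The consensus time then tends to $\sqrt2\,\hat T_{c_1}$. This exceeds $T'_o+\hat T_{c_1}+\hat T_{c_2}$ whenever $T'_o+\hat T_{c_2}<(\sqrt2-1)\hat T_{c_1}$.

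\textbf{Relation to the paper.} The paper's proof gets past this step only by writing $\dot e_{x,i}=-\frac{\gamma_1}{\hat T_{c_1}}(\cdots)^{1/2}\sign{e_{x,i}}$ on the surface. That is, it silently drops the same $1/\sqrt2$. Your algebra therefore exposes an error in the paper, not a defect of your method. Still, as a proof of the stated bound, your proposal does not close.

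\textbf{What can be proved, and a fix.} Your argument does prove consensus by $T'_o+\hat T_{c_2}+\sqrt2\,\hat T_{c_1}$. To obtain the stated $\hat T_c$, the design itself must change. For example, replace $\phi$ by $2\phi$ inside $\sigma_i$ (equivalently $\gamma_1\to\sqrt2\,\gamma_1$). The compensating term in $u_i$ then becomes $\frac{\gamma_1^2}{\hat T_{c_1}^2}(\alpha_1+3\beta_1e_{x,i}^2)$.

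\textbf{Minor point.} On $[0,T'_o]$, polynomial growth of $\upsilon$ does not by itself rule out finite escape, because $k'q'>1$. That step needs a Lyapunov-type bound; the paper does not address it at all.
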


\begin{proof}
First, the time derivative of  $\sigma_i$ along the trajectory of the system solution is given by
\begin{align}
\dot{\sigma}_{i}=&u_{i}+\Delta_i-u_0+ \frac{\abs{e_{v,i}}({u}_{i}+\Delta_i-u_0)+\frac{\gamma_1^2}{2\hat{T}_{c_1}^2}\left(\alpha_1+3\beta_1e_{x,i}^2\right)e_{v,i}}{\abs{\barpow{e_{v,i}}^2+\frac{\gamma_1^2}{\hat{T}_{c_1}^2}\left(\alpha_1\barpow{e_{x,i}}^1+\beta_1\barpow{e_{x,i}}^3\right)}^{1/2}}.\\
\end{align}{}
Using the control input $u_{i}$ given by \eqref{eq:uso}, one obtains
\begin{multline}\label{eq:sigma_p}
    \dot{\sigma}_{i}=-\frac{\gamma_{2}}{\hat{T}_{c_2}}\left(\alpha_{2}\abs{\sigma_{i}}^{p'}+\beta_{2}\abs{\sigma_{i}}^{q'}\right)^{k'}\sign{\sigma_{i}}\left(1+\frac{\abs{e_{v,i}}}{\abs{\barpow{e_{v,i}}^2+\frac{\gamma_1^2}{\hat{T}_{c_1}^2}\left(\alpha_1\barpow{e_{x,i}}^1+\beta_1\barpow{e_{x,i}}^3\right)}^{1/2}} \right)\\-\frac{\gamma_{1}}{2\hat{T}_{c_{1}}}\left(\alpha_1+3\beta_{1}e_{x,i}^{2}\right)\left(\sign{\sigma_{i}}+ \frac{\abs{e_{v,i}}\sign{\sigma_{i}}-e_{v,i}}{\abs{\barpow{e_{v,i}}^2+\frac{\gamma_1^2}{\hat{T}_{c_1}^2}\left(\alpha_1\barpow{e_{x,i}}^1+\beta_1\barpow{e_{x,i}}^3\right)}^{1/2}} \right)
    \\ -\left(\zeta_i \sign{\sigma_{i}}-\Delta_i+u_0\right) \left(1+\frac{\abs{e_{v,i}}}{\abs{\barpow{e_{v,i}}^2+\frac{\gamma_1^2}{\hat{T}_{c_1}^2}\left(\alpha_1\barpow{e_{x,i}}^1+\beta_1\barpow{e_{x,i}}^3\right)}^{1/2}} \right).
    \end{multline}
Let us consider the candidate Lyapunov function $V_{1}(\sigma_{i})=\abs{\sigma_{i}}$ with its time derivative as $\dot{V}_{1}=\sign{\sigma_{i}}\dot{\sigma_{i}}$. Since 
\begin{align}
    \frac{\abs{e_{v,i}}}{\abs{\barpow{e_{v,i}}^2+\frac{\gamma_1^2}{\hat{T}_{c_1}^2}\left(\alpha_1\barpow{e_{x,i}}^1+\beta_1\barpow{e_{x,i}}^3\right)}^{1/2}} \geq& 0,\\
     \abs{e_{v,i}}-e_{v,i}\sign{\sigma_{i}} \geq& 0,
    \end{align}
and  $$\zeta_i \geq u_0^{max}+\delta_i$$
one can easily rewrite the Lyapunov function derivative, using \eqref{eq:sigma_p}, in the following inequality
    \begin{align}\label{eq:deriv_lyap_v2}
    \dot{V}_{1}(\sigma) \leq -\frac{\gamma_{2}}{\hat{T}_{c_{2}}}\left(\alpha_{2}V_{1}(\sigma)^{p'}+\beta_{2} V_{1}(\sigma)^{q'} \right)^{k'}.
    \end{align}
From Theorem  \ref{thm:pred_time_stable}, one can deduce that $\sigma_i$ converges to zero in a fixed-time $\hat{T}_{c_{2}}$.

After time $\hat{T}_{c_2}$, one obtains
    \begin{align}
    0=e_{v,i}+\barpow{\barpow{e_{v,i}}^2+\frac{\gamma_1^2}{\hat{T}_{c_1}^2}\left(\alpha_1\barpow{e_{x,i}}^1+\beta_1\barpow{e_{x,i}}^3\right)}^{1/2},
    \end{align}
which in turn implies,

 \begin{align}
       \dot{e}_{x,i}=e_{v,i}=-\frac{\gamma_{1}}{\hat{T}_{c_{1}}}\left( \alpha_{1}\abs{e_{x,i}}+\beta_{1}\abs{e_{x,i}}^{3}\right)^{1/2}\sign{e_{x,i}}.
 \end{align}{}
 From Theorem \ref{th:cons_pred}, it is clear that $e_{x,i}$ converges to zero in a fixed-time before the settling time $\hat{T}_{c_{1}}$. Moreover, from \eqref{eq:sigmaso}, since $\sigma_i = 0$ and  $e_{x,i}=0$, then $e_{v,i}=0$. 
 Hence, we can conclude that system \eqref{eq:din_agent_i} with \eqref{eq:uso} as the control input, is fixed-time stable with predefined-time $\hat{T}_{c_1}+\hat{T}_{c_2}$. Moreover, due to Theorem \ref{thm:predefined_time_auto}, where the leader states are estimated in a fixed-time with the predefined settling time $T_{o}$. Hence, if $T_o'=T_o$, one can deduce that leader-follower consensus is achieved in fixed-time before the predefined-time $\hat{T}_c=T_{o}'+\hat{T}_{c_1}+\hat{T}_{c_2}$. At last, if $T'_{o}=0$ and $T_{o}<\hat{T}_{c_1}+\hat{T}_{c_2}$, the leader-follower consensus is  achieved before the predefined-time  $\hat{T}_c=\hat{T}_{c_1}+\hat{T}_{c_2}$.
\end{proof}

\section{Fixed-time leader-follower consensus with improved estimate for the \textit{UBST} using non-autonomous protocol }
\label{sec:NonAutoConsensus}
The autonomous leader-follower protocol presented in Section~\ref{Sec:AutConsensus} allows a fixed-time convergence. However, the estimate of the \textit{UBST} for the observer and controller are both too conservative. This is a common drawback on existing fixed-time consensus protocols, see e.g.,~\cite{Defoort2015,Ning2018} for the leader -follower problem for agents with first-order integrator dynamics and \cite{Ni2017,Ni2019} for agents with second-order integrator dynamics.  To address this issue, we present new protocols, based on the class of time-varying gains proposed in~\cite{Aldana2019}, to significantly reduce such conservatism.  Contrary to some existing protocols such as \cite{Wang2018,Morasso1997,Yucelen2019}, where the time-varying gains become singular when consensus is reached, in our approach the convergence is achieved with bounded time-varying gains in a user-defined time.

Before designing the proposed fixed-time leader-follower consensus protocol, let us define the following functions:
\begin{definition}\label{def:function_Phi}
Let us define the following
\begin{itemize}
    \item 
    $\Phi:\mathbb{R}_+\to \mathbb{R}_+ \cup\{+\infty\}\setminus\{0\}$  is a continuous function on $\mathbb{R}_+\setminus\{0\}$ that satisfies
\begin{itemize}
    \item $\int_0^{+\infty} \Phi(z)dz = 1$,
    \item $\Phi(\tau)<+\infty$, $\forall \tau\in\mathbb{R}_+\setminus\{0\}$, 
    \item is either non-increasing or locally Lipschitz on $\mathbb{R}_+\setminus\{0\}$.
\end{itemize}
\item $\psi:\mathbb{R}_+\to \mathbb{R}_+$ satisfies
$$\psi(\tau;T_c)=T_c\int_{0}^\tau\Phi(\xi)d\xi$$ with $T_c$ a positive constant,
\item $\eta$ is such that $$\eta(T)=\lim_{\tau\to T}\frac{1}{T_c}\psi(\tau;T_c)\leq 1$$
with $T$ a positive parameter,
\item \label{itm:function_rho} $\rho:\mathbb{R}_+\to \mathbb{R}_+$ satisfies
$$\rho(\tau;T_c) = \frac{1}{T_c}\Phi(\tau)^{-1}.$$
\end{itemize}
\end{definition}
\Blue{
\begin{definition} (Definition in \cite{kuhnel2015differential}). A regular parametrized curve, with parameter $t$, is a $\mathcal{C}^1(\mathcal{I})$ inmersion $c: \mathcal{I} \mapsto \mathbb{R}$, defined on a real interval $I\subseteq \mathbb{R}$. This means that $\frac{dc}{dt}\neq0$ holds everywhere
\end{definition}
\begin{definition}\label{def:parametrized} (Pg. 8 in \cite{kuhnel2015differential}). A regular curve is an equivalence class of regular parametrized curves, where the equivalence relation is given by regular (orientarion preserving) parameter transformation $\psi$, where $\psi:\mathcal{I} \rightarrow I'$ is $\mathcal{C}^{1}(\mathcal{I})$, bijective, and $\frac{d\psi}{dt}>0$. Therefore, if $c: \mathcal{I} \rightarrow \mathbb{R}$ is a regular parametrized curve and $\psi:\mathcal{I}\rightarrow \mathcal{I}'$ is a regular parameter transformation, then $c$ and $c \circ \psi : \mathcal{I}' \rightarrow \mathbb{R}$ are considered to be equivalent.
\end{definition}}
\begin{lemma}\label{lemm:time_trans}\cite{gomez2020design}
Let $t_0$ be the initial time. The function $t = \psi(\tau)+t_{0} $, defines a parameter transformation with $\tau = \psi^{-1}(t-t_{0})$ as its inverse mapping.  
\end{lemma}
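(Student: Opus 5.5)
To prove Lemma~\ref{lemm:time_trans}, I would check the conditions of Definition~\ref{def:parametrized} one by one for the map $\tau\mapsto \psi(\tau)+t_0$, where $\psi(\tau)=\psi(\tau;T_c)=T_c\int_0^\tau\Phi(\xi)d\xi$ is the function of Definition~\ref{def:function_Phi}. The interval $\mathcal{I}$ is the set of $\tau\ge 0$ on which $\psi$ is finite, and the image $\mathcal{I}'=[t_0,\,t_0+T_c\eta)$ lies inside $[t_0,t_0+T_c]$. So the claim reduces to showing that this map is $\mathcal{C}^1$ on the interior, has strictly positive derivative, and is a bijection onto its image. Once bijectivity is established, the inverse $\tau=\psi^{-1}(t-t_0)$ is well defined.

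\textbf{Regularity and monotonicity.} Since $\Phi$ is continuous on $\mathbb{R}_+\setminus\{0\}$, the fundamental theorem of calculus gives
\[
\frac{d}{d\tau}\bigl(\psi(\tau)+t_0\bigr)=T_c\,\Phi(\tau)
\quad\text{for every } \tau>0 ,
\]
and this derivative is continuous there. The map $\Phi$ takes values in $\mathbb{R}_+\cup\{+\infty\}\setminus\{0\}$, so $\Phi(\tau)>0$, and $\Phi(\tau)<+\infty$ for $\tau>0$. Hence the derivative is finite and strictly positive on $\mathcal{I}\setminus\{0\}$. Two consequences follow:
\begin{itemize}
\item $\psi$ is strictly increasing, hence injective;
\item $\psi$ is continuous at $\tau=0$ with $\psi(0)=0$, because $\Phi$ is integrable ($\int_0^\infty\Phi=1$), even though $\Phi(0)$ may be $+\infty$.
\end{itemize}
At $\tau=0$ the derivative may be infinite. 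I would handle this either by working on the open interval $(0,\infty)$ or by noting that one-sided behaviour at the endpoint does not affect the equivalence-class notion.

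\textbf{Surjectivity and the inverse.} A continuous, strictly increasing map sends an interval onto an interval. Here the image is $[t_0,\,t_0+\lim_{\tau\to\sup\mathcal{I}}\psi(\tau))$, and that limit equals $T_c\eta\le T_c$ by the definition of $\eta$. Therefore the map is a bijection $\mathcal{I}\to\mathcal{I}'$. Its inverse $\tau=\psi^{-1}(t-t_0)$ is $\mathcal{C}^1$ on the interior by the inverse function theorem, with
\[
\frac{d\tau}{dt}=\frac{1}{T_c\Phi(\tau)}=\rho(\tau;T_c)>0 .
\]
This formula links the transformation to the function $\rho$ of Definition~\ref{def:function_Phi}, which is how the time-varying gains will later enter.

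\textbf{Main obstacle.} I expect the endpoint issues to be the hardest part. At $\tau=0$, $\Phi$ may blow up, so the derivative is infinite and $\psi$ fails to be $\mathcal{C}^1$ in the classical sense. As $\tau\to\infty$, $t$ approaches $t_0+T_c$ but never reaches it. My first approach is to state the transformation on the open interval $(0,\infty)\to(t_0,t_0+T_c)$ and then extend by continuity to $\tau=0$.
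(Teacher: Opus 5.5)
Your proposal is correct and takes the same route as the paper, whose proof is only the remark that the claim follows from Definition~\ref{def:parametrized}; you carry out that verification explicitly. You show that $\psi$ is continuous with $\psi(0)=0$ and $\mathcal{C}^1$ with derivative $T_c\Phi(\tau)>0$ for $\tau>0$, so $\tau\mapsto\psi(\tau)+t_0$ is a strictly increasing bijection of $[0,\infty)$ onto $[t_0,t_0+T_c)$ (or of $[0,T)$ onto $[t_0,t_0+\eta(T)T_c)$), with inverse satisfying $d\tau/dt=\rho(\tau;T_c)$. The paper leaves implicit the endpoint subtlety you flag: $\Phi$ may be infinite at $\tau=0$, so the definition holds in the strict sense only on the open interval, and working on $(0,\infty)$ and extending by continuity is the right way to handle it.
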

\Blue{\begin{proof}
It follows from Definition \ref{def:parametrized}
\end{proof}}

To derive the fixed-time non-autonomous scheme, we define the following time-varying gain for each predefined settling time $T_{c_{i}}$ as follows using the previously defined functions:
\begin{equation}
\hat{\kappa}_{i}(t;t_0,T_{c_i},T)=\left\lbrace
\begin{array}{ccc}
\rho_{i}(\psi_{i}^{-1}(t-t_0;T_{c_i});T_{c_i}) & \text{if} & t\in [t_0,t_0+\eta_{i}(T) T_{c_i})\\
1 & & \text{otherwise}.
\end{array}
\right.
\end{equation}

\begin{remark}
Notice that if $T<+\infty$, then $\hat{\kappa}_{i}(t;t_0,T_{c_i},T)$ is bounded. Such bound can be user-defined by tuning $T$.
\end{remark}

Now, we are ready to present our main result. 
\begin{theorem}\label{thm:non_aut_protocol}
Let us consider the same observer parameters as in Theorem \ref{thm:predefined_time_auto} (i.e. $\alpha, \beta, p, q, k > 0, \zeta_x, \zeta_v,  \kappa_{i,x},  \kappa_{i,v}$, $T_{c_1}$, $T_{c_2}$).
Let $\varrho_1(t)=\hat{\kappa}_{1}(t;t_0,T_{c_1},T_{\alpha})$ and  $\varrho_2(t)=\hat{\kappa}_{2}(t;t'_{0},T_{c_2},T_{\beta})$ be time-varying gains. $T_{\alpha}$ and $T_{\beta}$ are positive parameters and $t'_{0}=t_{0}+\eta_{1}(T_{\alpha})T_{c_1}$.

Using the following distributed observer
\begin{align}\label{eq:observer_non_aut}
 \dot{\hat{x}}_{i}&=\hat{v}_i - \varrho_2(t)\kappa_{i,x}\left[(\alpha|e_{1,i}|^{p} + \beta|e_{1,i}|^{q})^k+\hat{\zeta}_{x}(t)\right] \sign{e_{1,i}}\\
 \dot{\hat{v}}_i &= -\varrho_1(t)\kappa_{i,v}\left[(\alpha|e_{2,i}|^{p} + \beta|e_{2,i}|^{q})^k+\hat{\zeta}_{v}(t)\right] \sign{e_{2,i}},
\end{align}
with $\hat{\zeta}_{x}(t)=\varrho_{2}^{-1}(t)\zeta_{x}$ and $\hat{\zeta}_{v}(t)=\varrho_{1}^{-1}(t)\zeta_{v}$, the observer error dynamics is fixed-time stable with the \textit{UBST} given by $T_{o}=t_{0}+\eta_{1}(T_{\alpha})T_{c_1}+\eta_{2}(T_{\beta})T_{c_2}$, with $T_{\alpha},T_{\beta}>0$. 

Let us consider the same control parameters as in Theorem \ref{thm:predefined_time_cont_auto} (i.e.  $\alpha_1,\alpha_2,\beta_1,\beta_2,\hat{T}_{c_1},\hat{T}_{c_2}>0$, $p', q', k' > 0,\zeta_i,\gamma_{1}, \gamma_{2} $) and set $T_{c_3}=\hat{T}_{c_1}+\hat{T}_{c_2}$. Let $\varrho_3(t)=\hat{\kappa}_{3}(t;T_{o},T_{c_3},T_{\gamma})$ be a time-varying gain with $T_{\gamma}$ a positive parameter.

Using the following controller
\begin{align}\label{eq:control_pred}
    u_{i}= \left\{ \begin{array}{ccc}
         \varrho_3(t)^2\upsilon(e_{x,i},\varrho_3(t)^{-1}e_{v,i})+\dot{\varrho}_3(t)\varrho_3(t)^{-1}e_{v,i}&  \text{   if } & t \in [T'_{o},T'_{o}+\eta_{3} (T_{\gamma})T_{c_3})\\
         \upsilon(e_{x,i},e_{v,i}) & \text{   if } & t \in [T'_{o}+\eta_{3}(T_{\gamma})T_{c_3}, +\infty) 
    \end{array}\right.
\end{align}
where $\upsilon(e_{x,i},\varrho_{3}^{-1}(t)e_{v,i})$ is given by~\eqref{eq:uso}, the leader-follower consensus is achieved in fixed time with the \textit{UBST} given by $\hat{T}=T'_{o}+\eta(T_{\gamma})T_{c_3}$.

\end{theorem}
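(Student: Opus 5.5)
The plan is to reduce every non-autonomous stage to its autonomous counterpart from Theorems \ref{thm:predefined_time_auto} and \ref{thm:predefined_time_cont_auto}, using the time-scale transformation of Lemma \ref{lemm:time_trans}. The tool is the identity $\frac{d}{d\tau}\psi(\tau;T_c)=T_c\Phi(\tau)=\rho(\tau;T_c)^{-1}$. Under $t=\psi(\tau)+t_0$, a system $\frac{dz}{dt}=\hat{\kappa}(t)F(z)$ therefore becomes $\frac{dz}{d\tau}=F(z)$ on the window $t\in[t_0,t_0+\eta(T)T_c)$, which corresponds to $\tau\in[0,T)$. If the autonomous $\tau$-system converges within some $\tau$-time $\tau^\ast$ and $\tau^\ast\le T$, then the $t$-system converges by $t_0+\psi(\tau^\ast)\le t_0+\eta(T)T_c$. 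If instead $\tau^\ast>T$, the gain is switched to $1$ after the window, and the remaining $\tau$-time is compressed in the same way.

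\textbf{Velocity observer.} For $t\in[t_0,t_0+\eta_1(T_\alpha)T_{c_1})$ the choice $\hat\zeta_v=\varrho_1^{-1}\zeta_v$ gives
\[
\dot{\tilde v}=-\varrho_1(t)\Phi_v^{0}(\graphm{}\tilde v)-\zeta_v\kappa\,\mathrm{sign}(\cdot)-\boldsymbol{1}u_0 ,
\]
where $\Phi_v^0$ denotes $\Phi_v$ with $\zeta_v=0$. The discontinuous term still dominates $u_0$, since $\kappa_v\zeta_v\ge u_0^{max}$. Repeating the computation in the proof of Theorem \ref{thm:predefined_time_auto} with $V_1$ then yields
\[
\dot V_1\le-\varrho_1(t)\frac{\kappa_v\lambda_{\min}}{N}\left(\alpha V_1^p+\beta V_1^q\right)^k\le -\varrho_1(t)\frac{\gamma(\rho)}{T_{c_1}}\left(\alpha V_1^p+\beta V_1^q\right)^k .
\]
Changing the time variable to $\tau=\psi_1^{-1}(t-t_0)$ gives $\frac{dV_1}{d\tau}\le-\frac{\gamma(\rho)}{T_{c_1}}\rho_1(\tau)T_{c_1}\Phi_1(\tau)\cdots$. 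Because $\rho_1(\tau)=\frac{1}{T_{c_1}}\Phi_1(\tau)^{-1}$, the factor $\varrho_1\,dt/d\tau=1$ up to the constant, so the bound becomes $\frac{dV_1}{d\tau}\le-\gamma(\rho)(\alpha V_1^p+\beta V_1^q)^k$, up to normalisation. By Theorems \ref{th:cons_pred} and \ref{thm:pred_time_stable}, $V_1$ reaches zero within one unit of normalised $\tau$-time.

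The main obstacle is \textbf{matching normalisations} so that this convergence happens inside the window $[0,T_\alpha)$, or, if it does not, arguing that it completes before $t_0+\eta_1(T_\alpha)T_{c_1}$. I would first try to use the properties $\int_0^\infty\Phi=1$ and $\eta\le1$ to show that the autonomous Lyapunov bound forces $V_1\big(\psi_1(\tau)\big)$ to satisfy the comparison ODE whose solution vanishes at the required time. Failing that, I would take $\eta_1(T_\alpha)$ as the UBST exactly as the statement defines it, in the spirit of \cite{Aldana2019}. After the window the gain equals $1$, and Theorem \ref{thm:predefined_time_auto} keeps $\tilde v\equiv0$ invariant.

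\textbf{Position observer.} From $t'_0=t_0+\eta_1(T_\alpha)T_{c_1}$ onward $\tilde v\equiv0$, so the $\tilde x$ dynamics are autonomous apart from $\varrho_2$. The same argument with $V_2$ gives convergence by $t'_0+\eta_2(T_\beta)T_{c_2}$, which is $T_o$.

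\textbf{Controller.} After $T_o$, $\hat x_i=x_0$ and $\hat v_i=v_0$, so \eqref{eq:sys_2dn_order} holds. On the window of $\varrho_3$, I define $\tau=\psi_3^{-1}(t-T_o)$ and the rescaled velocity $w_i=\varrho_3^{-1}e_{v,i}$. Since $\frac{d\tau}{dt}=\varrho_3$, we get $\frac{de_{x,i}}{d\tau}=w_i$. Moreover,
\[
\frac{dw_i}{d\tau}=\varrho_3^{-2}\bigl(u_i+\Delta_i-u_0\bigr)-\varrho_3^{-3}\dot\varrho_3\,e_{v,i}.
\]
Substituting \eqref{eq:control_pred}, the $\dot\varrho_3$ terms cancel, leaving
\[
\frac{dw_i}{d\tau}=\upsilon(e_{x,i},w_i)+\varrho_3^{-2}(\Delta_i-u_0).
\]
Since $\varrho_3\ge1$ (I would verify this from $\Phi\le 1/T_c$ or assume it from the construction of $\Phi$), the perturbation is bounded by $u_0^{max}+\delta_i\le\zeta_i$. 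Hence the proof of Theorem \ref{thm:predefined_time_cont_auto} applies verbatim in $\tau$, and $(e_{x,i},w_i)$ reaches zero within $\tau$-time $\hat T_{c_1}+\hat T_{c_2}=T_{c_3}$. Mapping back through $\psi_3$, with the window/normalisation caveat handled as above, gives consensus by $T'_o+\eta_3(T_\gamma)T_{c_3}$. Afterwards $u_i=\upsilon$, for which the origin is invariant, so consensus is maintained.
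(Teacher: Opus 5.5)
Your route is the paper's. You rescale time by $t=\psi(\tau)+t_0$, so that on each window the stage becomes the autonomous system of Theorem~\ref{thm:predefined_time_auto} or~\ref{thm:predefined_time_cont_auto} in $\tau$. You correctly get that $\hat\zeta_v=\varrho_1^{-1}\zeta_v$ preserves domination of $u_0$, and that the $\dot\varrho_3$ terms cancel in the controller.

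\textbf{Gap 1: convergence inside the window.} This is the step you call the main obstacle, and you leave it open. Your fallback of taking $\eta_1(T_\alpha)T_{c_1}$ as the UBST ``as the statement defines it'' is circular. Your remark that after the window ``the remaining $\tau$-time is compressed in the same way'' is also wrong: there $\varrho_1=1$ and nothing is compressed.

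First fix the normalisation. Since $dt/d\tau=T_{c_1}\Phi_1(\tau)=\varrho_1^{-1}$ exactly, you get $\frac{dV_1}{d\tau}\le-\frac{\gamma(\rho)}{T_{c_1}}(\alpha V_1^p+\beta V_1^q)^k$. Hence the $\tau$-settling time satisfies $T'_{c_1}\le T_{c_1}$, not ``one unit''. Because $\psi_1$ is increasing with $\psi_1(T_\alpha)=\eta_1(T_\alpha)T_{c_1}$, convergence by $t_0+\eta_1(T_\alpha)T_{c_1}$ holds exactly when $T'_{c_1}\le T_\alpha$.

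No property of $\Phi$ (such as $\int_0^\infty\Phi=1$ or $\eta\le1$) gives this for arbitrary $T_\alpha>0$. For small $T_\alpha$ and large initial errors, the window closes first and $\varrho_1$ reverts to $1$. Then only $t_0+\eta_1(T_\alpha)T_{c_1}+T_{c_1}$ can be guaranteed. The position stage needs $\tilde v\equiv0$ from $t'_0$ on, so it inherits the same defect.

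The paper's proof does not resolve this either. It simply asserts $\lim_{\tau\to T'_{c_1}}(\psi_1(\tau)+t_0)\le t_0+\eta_1(T_\alpha)T_{c_1}$, and likewise for $\psi_2$ and $\psi_3$. A complete argument needs extra hypotheses such as $T_\alpha\ge T_{c_1}$, $T_\beta\ge T_{c_2}$ and $T_\gamma\ge T_{c_3}$ (or the corresponding actual $\tau$-settling times). With these, your mapping back through each $\psi_i$ is valid.

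\textbf{Gap 2: $\varrho_3\ge1$.} Your other deferred step does not follow from Definition~\ref{def:function_Phi}. We have $\varrho_3(T_o)=\rho_3(0;T_{c_3})=1/(T_{c_3}\Phi_3(0))$. For the exponential $\Phi_3$ of Section~\ref{sec:results} ($\hat\alpha_3=1.8$, $T_{c_3}=2$), this is about $0.26$. So near the start of the window, the $\tau$-time perturbation $\varrho_3^{-2}(\Delta_i-u_0)$ can exceed $\zeta_i$ by an order of magnitude.

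The domination step in the proof of Theorem~\ref{thm:predefined_time_cont_auto} needs $|\pi_i|\le\zeta_i$, so it fails. The paper only observes that $\pi_i$ is bounded and vanishing, which is not what that proof uses. You need an explicit extra condition. One option is $T_{c_3}\Phi_3(\tau)\le1$ on $[0,T_\gamma)$. Another is a correspondingly enlarged $\zeta_i$. A third is to replace $\zeta_i$ inside $\upsilon$ by $\varrho_3^{-2}\zeta_i$, mirroring what $\hat\zeta_v=\varrho_1^{-1}\zeta_v$ does in the observer.
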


The proof of Theorem \ref{thm:non_aut_protocol} will be divided into two parts. The first part focuses on the observer stability whereas the second one focuses on the controller stability.

\begin{proof}
 
First, let us study the observer error dynamics using the non-autonomous observer \eqref{eq:observer_non_aut}. 

Let us consider the observer errors as in \eqref{eq:error_obs_leader}. Using \eqref{eq:observer_non_aut}, the observation error dynamics is expressed as follows
\begin{align}\label{eq:error_observer_tau}
     \dot{\tilde{x}}_{i}&=\tilde{v}_i - \varrho_{2} \kappa_{i,x}\left[(\alpha|e_{1,i}|^{p} + \beta|e_{1,i}|^{q})^k+\hat{\zeta}_{x}\right] \sign{e_{1,i}}\\
    \dot{\tilde{v}}_i &= -\varrho_{1}\kappa_{i,v}\left[(\alpha|e_{2,i}|^{p} + \beta|e_{2,i}|^{q})^k+\hat{\zeta}_{v}\right] \sign{e_{2,i}}-u_{0}.
\end{align}
Now, considering the observer error dynamics of the velocity $\tilde{v}_{i}$ in the new $\tau$-time variable  as follows
\begin{align}\label{eq:dyn_v_obs_tau}
    \frac{d \tilde{v}_{i}}{d \tau}&= \frac{d \tilde{v}_{i}}{d t}\frac{d t}{d \tau},
\end{align}
and according to the parameter transformation given in Lemma \ref{lemm:time_trans},
\begin{align}\label{eq:dt_dtau}
    \frac{d t}{d \tau}=\left.\frac{d}{d \tau}\left(\psi_{i}(\tau)-t_{0}\right)\right|_{\tau=\psi_{i}^{-1}(t-t_{0};T_{c_{i}})}&=\rho_{i}(\psi_{i}^{-1}(t-t_{0});T_{c_{i}})^{-1}\\&=\hat{\kappa}_{i}(t;t_{0},T_{c_{i}},T)^{-1}.
\end{align}
Thus, the observation error dynamics of the velocity given by \eqref{eq:dyn_v_obs_tau} is rewritten, using \eqref{eq:dt_dtau}, as follows 
\begin{align} \label{eq:dyn_v_obs_tau_2}
    \frac{d \tilde{v}_{i}}{d \tau}&= -\hat{\kappa}_{1}(t;t_0,T_{c_1},T_{\alpha})^{-1}\varrho_{1}\kappa_{i,v}\left[(\alpha|e_{2,i}|^{p} + \beta|e_{2,i}|^{q})^k+\hat{\zeta}_{v}\right]\sign{e_{2,i}} - \varpi(\tau),
\end{align}
where $\varpi(\tau)= \hat{\kappa}_{1}(t;t_0,T_{c_1},T_{\alpha})^{-1}u_{0}$ with $\abs{u_{0}}<u_{0}^{max}$ is the disturbance term and $\hat{\kappa}_{1}(t;t_0,T_{c_1},T_{\alpha})^{-1}=\varrho_{1}^{-1}(t)$. Then, the last expression is written as  

\begin{align}\label{eq:dyn_obs_tau}
        \frac{d \tilde{v}_{i}}{d \tau}&=- \kappa_{i,v}\left[(\alpha|e_{2,i}|^{p} + \beta|e_{2,i}|^{q})^k+\hat{\zeta}_{v}\right]\sign{e_{2,i}}-\varpi(\tau).
\end{align}
Note that by the definition of the function $\Phi_{1}(\tau)$, the time-varying gain $\hat{\kappa}_{1}(t;t_0,T_{c_1},T_{\alpha})^{-1}$ for $t \in [t_{0},t_{0}+\eta_{1}(T_{\alpha})T_{c_1})$, can be written as  $\rho_{1}(\tau;T_{c_1})^{-1} \; \forall \tau$, and function $\rho_{1}(\tau;T_{c_1})^{-1}$ is non-increasing. Besides, $\rho_{1}(\tau;T_{c_1})^{-1}$ is bounded and $\rho_{1}(\tau;T_{c_1})^{-1}\rightarrow 0$ as $\tau \rightarrow +\infty$. Then, the disturbance $\varpi(\tau)=\rho_{1}(\tau;T_{c_1})^{-1}u_{o}$ is vanishing. Furthermore, notice that $\hat{\zeta}_{v}(t)=\varrho_{1}^{-1}(t)\zeta_{v}$ and $\abs{\varpi(\tau)}<\hat{\zeta}_{v},\; \forall \tau$ since $u_{0}^{max}\leq\kappa_{v}\zeta_{v}$.
Then, similarly to \eqref{eq:compact_form}, the compact form of \eqref{eq:dyn_obs_tau} is 
\begin{align}
\label{eq:compact_form_tau_v}
    \frac{d\tilde{v}}{d\tau} &= -\Phi_v\left(\graphm{}\tilde{v}\right)-  \boldsymbol{1} \varpi(\tau)
\end{align}
with $\tilde{v}=[\tilde{v}_1 \ \cdots \ \tilde{v}_N]^T\in\mathbb{R}^N$. 

Furthermore, from Theorem \ref{thm:predefined_time_auto}, the observation error dynamics of velocity \eqref{eq:compact_form_tau_v} is fixed-time stable in the time-variable $\tau$ and,  converges to the origin with a settling time $T'_{c_{1}}<+\infty$. Then, the observation error in velocity reaches the origin at $T(\tilde{v}_{0})=\lim_{\tau\rightarrow T'_{c_{1}}}(\psi_1(\tau)+t_{0})\leq t_{0}+\eta_{1}(T_{\alpha})T_{c_1};\forall \tilde{v}_{0} \in  \mathbb{R}^{N}$ as the initial conditions. In a similar way, the observation error dynamics of the position $\tilde{x}_{i}$ in the time-variable $\tau$ is written as follows  
\begin{align}\label{eq:dyn_x_obs_tau}
\frac{d \tilde{x}_{i}}{d \tau}&= \upsilon_{i}(\tau)- \kappa_{i,x}\left[(\alpha|e_{2,i}|^{p} + \beta|e_{2,i}|^{q})^k+\hat{\zeta}_{x}\right]\sign{e_{2,i}}
\end{align}
where $\upsilon_{i}(\tau)=\hat{\kappa}_{2}(t;t_{0}+\eta_{1}(T_{\alpha})T_{c_1},T_{c_2},T_{\beta})^{-1}\tilde{v}_{i}$. The compact form of \eqref{eq:dyn_x_obs_tau} is 
\begin{align}
\label{eq:compact_form_tau_x}
\frac{d\tilde{x}}{d\tau} &= \upsilon -\Phi_x\left(\graphm{}\tilde{x}\right)
\end{align}
with $\upsilon=[\upsilon_{1}\cdots\upsilon_{N}]\in \mathbb{R}^{N}$. Then, due to $\tilde{v}=0$ for $t\geq t_{0}+\eta_{1}(T_{\alpha})T_{c_1}$, the observation error dynamics of position \eqref{eq:compact_form_tau_x} is fixed-time stable in the time-variable $\tau$, and converges to the origin with a settling time $T'_{c_{2}}<+\infty$. Hence, the observation error in position reaches the origin at $T(\tilde{x}_{0})=\lim_{\tau\rightarrow T'_{c_{2}}}(\psi_2(\tau)+t'_{0})\leq t'_{0}+\eta_{2}(T_{\beta})T_{c_2};\forall \tilde{x}_{0} \in  \mathbb{R}^{N}$, with $t'_{0}=t_{0}+\eta_{1}(T_{\alpha})T_{c_1}$. Therefore, the observer error dynamics is fixed-time stable and converges to the origin before the predefined-time $T_{o}=t_{0}+\eta_{1}(T_{\alpha})T_{c_1}+\eta_{2}(T_{\beta})T_{c_2}$. Notice that $T'_{c_{1}}$ and $T'_{c_{2}}$ are the settling time for the system in the time-variable $\tau$. 

Then, let us study the tracking error dynamics using the non-autonomous controller \eqref{eq:control_pred}. Consider the following coordinate change
\begin{align}\label{eq:change_coordinate}
e_{x,i}=&\tilde{e}_{x,i}\\
e_{v,i}=&\varrho_{3}(t)\tilde{e}_{v,i}
\end{align}
where $e_{x,i}$ and $e_{v,i}$ are the tracking errors for each agent defined in \eqref{eq:errors} or, its equivalent in \eqref{eq:errors2} after the convergence of the observation error. Then, the dynamics of the variable $\tilde{e}_{i}=[\tilde{e}_{x,i},\tilde{e}_{v,i}]^T$ is the following
\begin{align}\label{eq:}
    \dot{\tilde{e}}_{x,i}&=\varrho_{3}\tilde{e}_{v,i}\\
    \dot{\tilde{e}}_{v,i}&=\varrho_{3}^{-1}u_{i}-\varrho_{3}^{-1}\dot{\varrho}_{3}\tilde{e}_{v,i}+\varrho_{3}^{-1}(\Delta_{i}-u_{0}).
\end{align}
      
Now, let us consider the parameter transformation given by Lemma \ref{lemm:time_trans} to get the dynamics in $\tau$-variable. Then, the dynamics of \eqref{eq:change_coordinate} expressed in the time-variable $\tau$ is

\begin{align}\label{eq:sys_tau_track}
     \frac{d \tilde{e}_{x,i}}{d \tau} &= \tilde{e}_{v,i}\\
 \frac{d \tilde{e}_{v,i}}{d \tau}&= \varrho_{3}^{-2}u_{i}-\varrho_{3}^{-2}\dot{\varrho}_{3}\tilde{e}_{v,i}+\varrho_{3}^{-2}(\Delta_{i}-u_{0}).
\end{align}

Using the controller \eqref{eq:control_pred} for $t\in [T_{o}',T_{o}'+\eta_{3}(T_{\gamma})T_{c_3})$, system \eqref{eq:sys_tau_track} is written as

 \begin{align}\label{eq:din_tau}
     \frac{d \tilde{e}_{x,i}}{d \tau} &= \tilde{e}_{v,i}\\
 \frac{d \tilde{e}_{v,i}}{d \tau}&= \upsilon (\tilde{e}_{x,i},\tilde{e}_{v,i}) + \pi_{i}(\tau),
 \end{align}
with $\pi_{i} (\tau)=\left.\varrho_{3}^{-2}(\Delta_{i}(t)-u_{0})\right|_{t=\psi_{3}(\tau)+T_{o}}$. Notice that $\Delta_{i}(t)$ satisfies $\abs{\Delta_{i}(t)}\leq \delta_{i}$ and $u_{0}$ is unknown but bounded by $\abs{u_{o}}\leq u_{0}^{max}$. By Definition \ref{def:function_Phi}, the function $\varrho_{3}(t)^{-2}$ is non increasing. Then, $\pi_{i}(\tau)$ is bounded and $\pi_{i}(\tau)\rightarrow 0$ as $\tau \rightarrow +\infty$. Thus, from Theorem \ref{thm:predefined_time_cont_auto}, system \eqref{eq:din_tau} is fixed-time stable in the time-variable $\tau$ with $T'_{c_{3}}<+\infty$ as its settling time. Hence, the tracking errors ($\tilde{e}_{x,i}$ and $\tilde{e}_{v,i}$), with $e_{0}=[\tilde{e}_{x,i}(T_{o}),\tilde{e}_{v,i}(T_{o})]$ as initial conditions, reach the origin at $T(e_{0})=\lim_{\tau\rightarrow T'_{c3}}(\psi_3(\tau)+T_{o})\leq T_{o}+\eta_3 (T_{\gamma})T_{c_3}; \forall e_{0}\in \mathbb{R}^{2}$.

 Thus, the tracking error dynamics is fixed-time stable with $\eta_{3} (T_{\gamma})T_{c_3}$ as the predefined \textit{UBST}. Hence, if $T'_{o}=T_{o}$ and from the fact that observer \eqref{eq:observer_non_aut} estimates the leader state in predefined-time and controller \eqref{eq:control_pred} drives the agents towards the leader state trajectory, one can conclude that the leader-follower consensus is achieved in fixed-time before the predefined-time $\hat{T}=T_{o}+\eta(T_\gamma)T_{c_3}$. At last, if $T'_{o}=t_{o}$ and $T_{o}<\eta(T_\gamma)T_{c_3}$ the leader-follower consensus is achieved before the predefined time $\hat{T}=t_{0}+\eta(T_\gamma)T_{c_3}$.
\end{proof}
\begin{remark}
It is worth noting that the non-autonomous protocol is derived from the autonomous one. As discussed in the next section, this scheme based on bounded time-varying gains, has been introduced to improve the convergence time estimation, i.e.,  the slack between the \textit{UBST} and the convergence time is reduced.
\end{remark}

\section{Simulation results}\label{sec:results}
In this section, we illustrate our main results with the autonomous and non-autonomous protocols for the leader-follower consensus problem. In order to compare the autonomous and non-autonomous schemes proposed in this work, we will also compare the slack between the \textit{UBST} and the real convergence time of the system for each control scheme.
 \begin{figure}[h]
 	\centering
 	\includegraphics[width=0.4\textwidth]{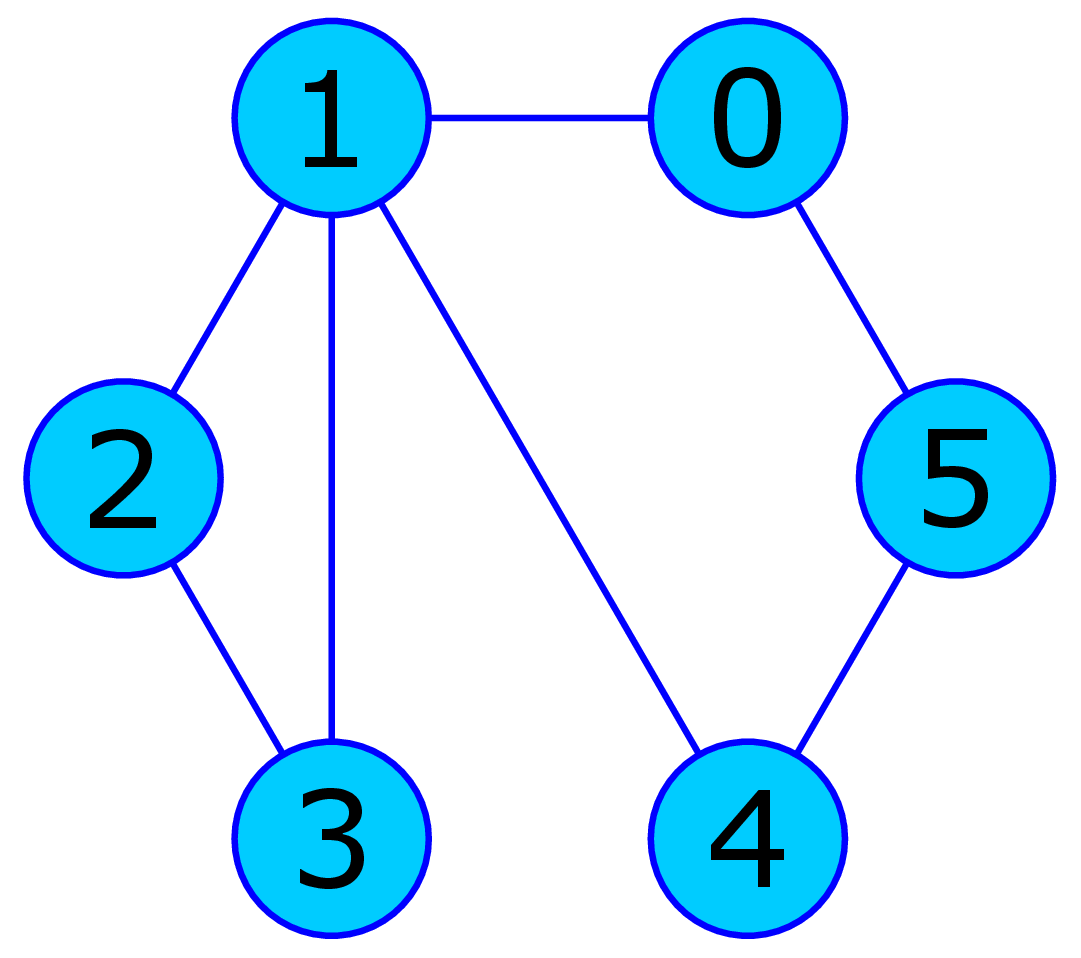}
 	\caption{Communication topology with 5 followers.}
 	\label{fig:graph_2}
 \end{figure}
For all schemes, we consider the same scenario for comparison purposes. We consider a multi-agent system composed of $N = 5$ agents where the dynamics of each agent is given by Eq. \eqref{eq:din_agent_i} with an external perturbation  $\Delta_i(t)= \sin (40t+0.1i)$, with $i=1\hdots N$. The communication topology, given in Figure \ref{fig:graph_2} is undirected and contains a spanning tree with the leader agent as the root. For the leader, its control input is given by $u_{0}=4\cos{(2t)}$ with the initial conditions $[x_{0},v_{0}]=[-1,0]$. From Figure \ref{fig:graph_2}, one gets  $\lambda_{\min}(\graphm{})=0.2907$. The initial conditions of the agents are as follows 
\begin{align}
  x(0)&=\left[-10, -5, 0, 5, 10\right]\\
  v(0)&=\left[0, 0, 0, 0, 0\right]
\end{align}
and the initial conditions for the observer are randomly set as 
\begin{align}
    \hat{x}(0)&=\left[-5.81, -7.82, 4.57, 9.22, 5.94\right]\\
    \hat{v}(0)&=\left[5.57, -6.42, 4.91, 8.39, -7.87\right].
\end{align}

\subsection{Fixed-time leader-follower consensus using autonomous protocols }\label{sec:result_fixed}

In this subsection, we present the results of the autonomous scheme presented in Section \ref{Sec:AutConsensus}. According to Theorem \ref{thm:predefined_time_auto}, the distributed fixed-time observer \eqref{eq:observer_dynamics} with  $p=1.5$, $q=3.0$, $k=0.5$, $\alpha=1$, $\beta=2$, $T_{c_{1}}=0.9s$, $T_{c_{2}}=0.1$, $\kappa_{x}=3.53$, $\kappa_{v}=31.82$ and  $\zeta_{v}= 0.0678$ guarantees that the observer error converges to zero before the predefined-time $T_{c_{1}}+T_{c_{2}}=1s$. Figure \ref{fig:obs_auto_1} shows the leader state estimation for each agent, while the left column of Figure \ref{fig:observer_errors} shows the observer errors for each agent. One can see in more details in the left column of Figure \ref{fig:observer_errors_zoom} the settling time of the observation errors and the \textit{UBST} as the dotted line, where the settling time for the velocity error ($\tilde{v}$) is $T_{1}\approx 0.013s$, and for the position error ($\tilde{x}$) is  $T_{2}\approx0.143s$. It is possible to see that the settling time for the observation error occurs before the predefined time, i.e., $T_{1}<T_{c_1}$ for the velocity error and $T_{1}+T_{2}<T_{o}$ for the observation error. In the simulation the controller is activated at the same time that the observer, i.e., $T'_{o}=0$ and $T_{o}<\hat{T}_{c_{1}}+\hat{T}_{c_{2}}$. Then, the controller \eqref{eq:uso} is applied in order to follow the trajectory of the leader with $p'=1.5$, $q'=3.0$, $k'=0.5$, $\alpha_{1}=\alpha_{2}=1/\beta_{1}=1/\beta_{2}=1/4$, $\hat{T}_{c_{1}}=1s$ and $\hat{T}_{c_{2}}=1s$. The left column of Figure \ref{fig:tracking_errors} shows the tracking error, where the tracking errors $e_{x}$ and $e_{v}$ converge to zero before $\hat{T}_{c}=\hat{T}_{c_{1}}+\hat{T}_{c_{2}}=2s$ with $T'_{o}=0$. One can see in more details in Figure \ref{fig:tracking_errors_zom} that the convergence time of the tracking error occurs at $T_{3}\approx 1.228s<\hat{T}_{c}$ where the \textit{UBST} is plotted in dotted line. The states of the agents are shown in Figure \ref{fig:sys_states_auto}, where it can be seen that the leader-follower consensus is successfully achieved.

   \begin{figure}[h!]
 	\centering
 	\includegraphics[width=\textwidth]{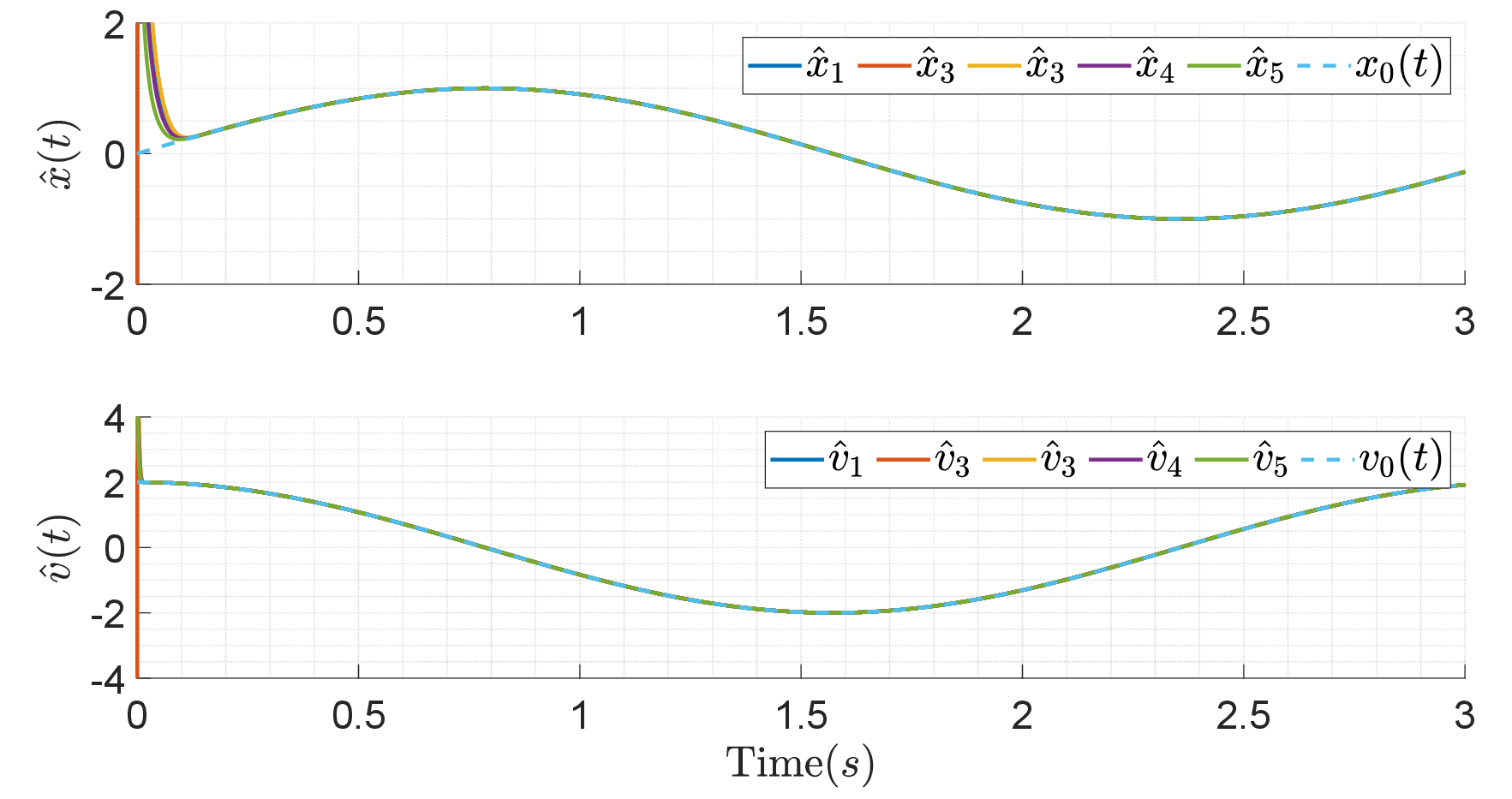}
 	\caption{\textbf{Autonomous protocol.} Leader states estimation for each agent.}
 	\label{fig:obs_auto_1}
 \end{figure}

 \begin{figure}[h!]
 	\centering
\includegraphics[width=\textwidth]{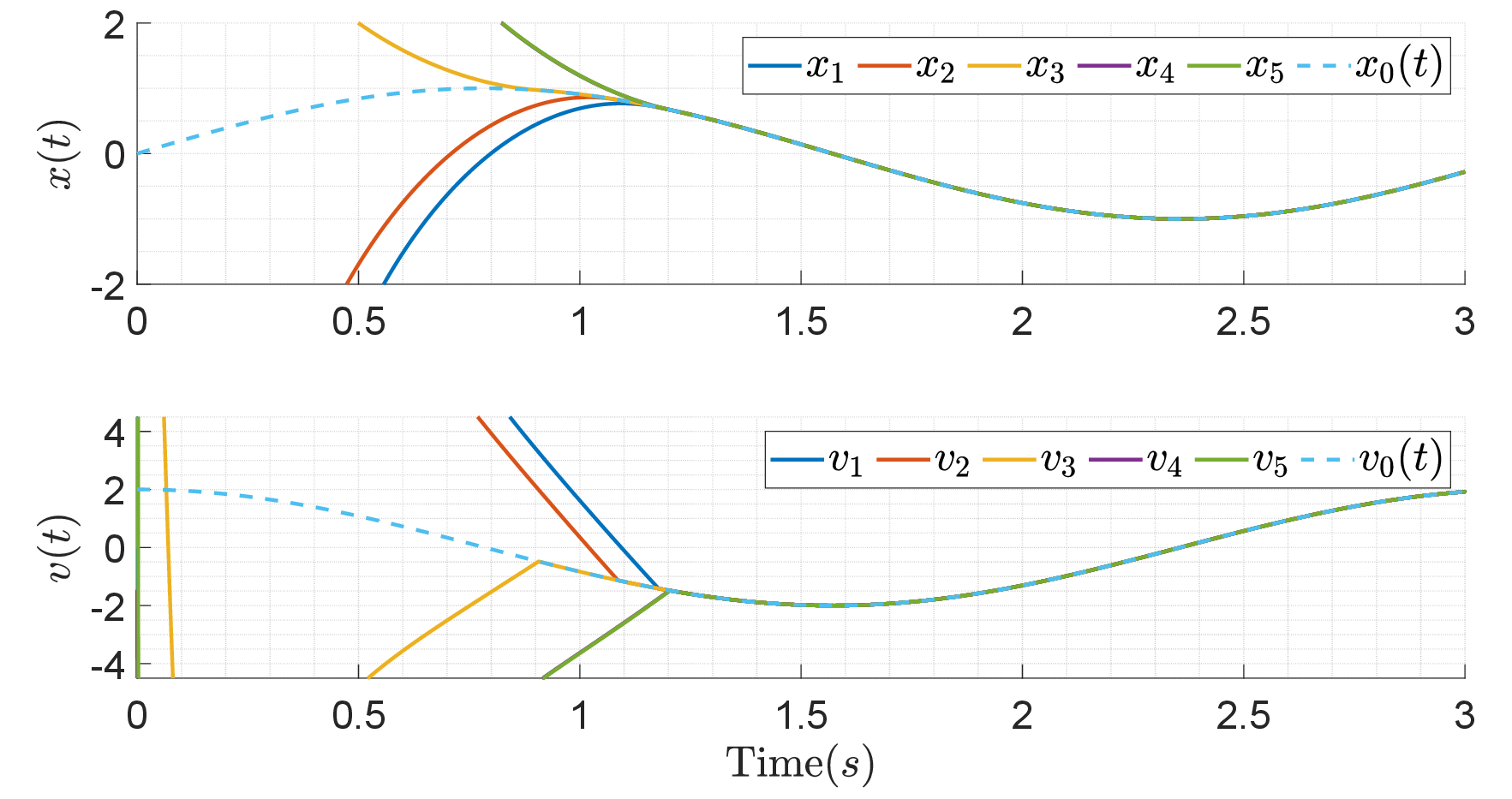}

 	\caption{ \textbf{Autonomous protocol.} Trajectories of each agent.}
 	\label{fig:sys_states_auto}
 \end{figure}

\subsection{Fixed-time leader-follower consensus using non-autonomous protocols}
The results of the fixed-time observer/controller scheme presented in Section \ref{sec:NonAutoConsensus}, using time-varying gains, are shown in this subsection. Consider the same example as in the previous subsection with the same external perturbation. 

Now, consider the observer/controller scheme proposed in Theorem \ref{thm:non_aut_protocol}, where function $\Phi_i$ is defined as $$\Phi_{i}(\tau)=\hat{\alpha}_{i}\eta_{i}^{-1} e^{-\hat{\alpha}_{i} \tau}$$ for $i=1,2,3$ with $\eta_{i}(T)=1-e^{-\hat{\alpha}_{i}T}$  and $\hat{\alpha}_{1}=220$, $\hat{\alpha}_{2}=90$ and $\hat{\alpha}_{3}=1.8$. This function is used to compute the time-varying gains $\varrho_{i}$. Then, the gain $\hat{\kappa}_{i} (t;t_{o},T_{c_i},T)$ used for the observer \eqref{eq:observer_non_aut} and the controller \eqref{eq:control_pred}, is defined as

\begin{align}
\hat{\kappa}_{i}(t;t_0,T_{c_i},T)=\left\lbrace
\begin{array}{ccc}
\frac{\eta_{i}}{\hat{\alpha}_{i} (T_{c_{i}}-\eta_{i} (t-t_{0}))} & \text{if} & t\in [t_0,t_0+\eta_{i}(T)T_{c_i})\\
1 & & \text{otherwise}.
\end{array}
\right.
\end{align}    
The user-defined parameters are set as follows $T_{c_1}=0.1s$, $T_{c_2}=0.9s$ and $T_{c_{3}}=\hat{T}_{c_{1}}+\hat{T}_{c_{2}}$  with $\hat{T}_{c_{1}}=1s$ and $\hat{T}_{c_{2}}=1s$, the parameters for $\eta_{i}$ are set as $T_{\alpha}=0.016$, $T_{\beta}=0.055$ and $T_{\gamma}=1.5$ for $i=1,2,3$ respectively, $t_{o}=0s$ for the observer and $T'_{o}=t_{0}$ for the controller. In order to compare the control scheme proposed in Theorems \ref{thm:predefined_time_auto}-\ref{thm:predefined_time_cont_auto} with Theorem \ref{thm:non_aut_protocol}, the parameters $\alpha_1,\alpha_2,\beta_1,\beta_2$, $p', q', k' $ for the controller and the parameters $\alpha, \beta, p, q, k , \zeta_x, \zeta_v,  \kappa_{x},  \kappa_{v}$ for the observer were taken from the result presented in Section \ref{sec:result_fixed}. Figure \ref{fig:pre_obs_state} shows the leader state estimation for each agent, while the right column of Figure \ref{fig:observer_errors} shows the observer errors for each agent. One can see in more details in the right column of Figure \ref{fig:observer_errors_zoom} the settling time of the observation errors ($\tilde{x}$ and $\tilde{v}$) and the \textit{UBST} as the dotted line, where the settling time for the velocity error $\tilde{v}$ is $T_{1}\approx0.09478s$, and for the position error $\tilde{x}$ is  $T_{2}\approx0.9891$. It is possible to see that the settling time for the observation error occurs before the predefined time $T_{o}=\eta_{1}(T_{\alpha})T_{c_{1}}+\eta_{2}(T_{\beta})T_{c_{2}}$. Besides, the controller \eqref{eq:control_pred} is applied in order to follow the trajectory of the leader. The right column of Figure \ref{fig:tracking_errors} shows the fixed-time convergence of the tracking error ($e_{x}$ and $e_v$).  One can see in more details in the right column of Figure \ref{fig:tracking_errors_zom} that the convergence time of the tracking error occurs at $T_{3}\approx1.952s$.  The states of the agents are shown in Figure \ref{fig:sys_trajectories_non_auto}, where it can be seen that the leader-follower consensus is successfully achieved.

Unlike the control scheme proposed in Theorem \ref{thm:predefined_time_auto} (for the observer) and Theorem \ref{thm:predefined_time_cont_auto} (for the controller), the slack between the predefined \textit{UBST} given by the user and the real convergence time is less conservative. Figure \ref{fig:observer_errors_zoom} shows the convergence of both protocols for the observer and Figure \ref{fig:tracking_errors_zom} shows the convergence of both protocols for the controller, where one can see the \textit{UBST} as the dotted line and, the difference between the slack of the autonomous and non-autonomous protocols. Moreover, the slack of the non-autonomous protocol can be adjusted by the parameters of the time-varying gain $\hat{\kappa}_{i}(t;t_{0},T_{c_{i}},T)$. However, in this case, this gain increases. Thus, one needs to establish a trade-off between the size of the upper bound for $\hat{\kappa}_{i}(t;t_{0},T_{c_i},T)$ and how small the slack is.  
 \begin{figure}[h!]
 	\centering
  	 \includegraphics[width=\textwidth]{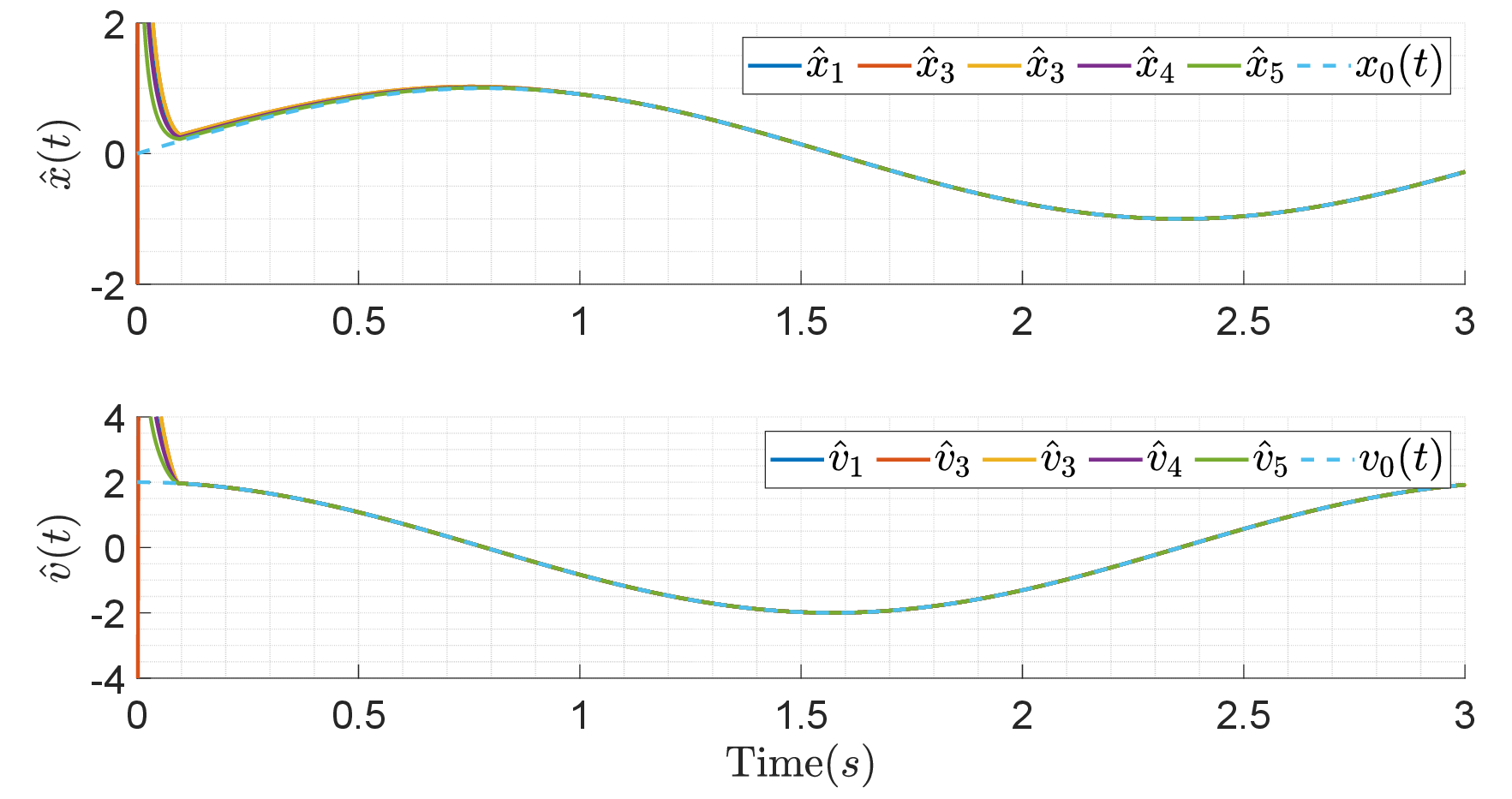}
 	\caption{ \textbf{Non-autonomous protocol.} Leader states estimation for each agent.}
 	\label{fig:pre_obs_state}
 \end{figure}

   \begin{figure}[h!]
 	\centering
\includegraphics[width=\textwidth]{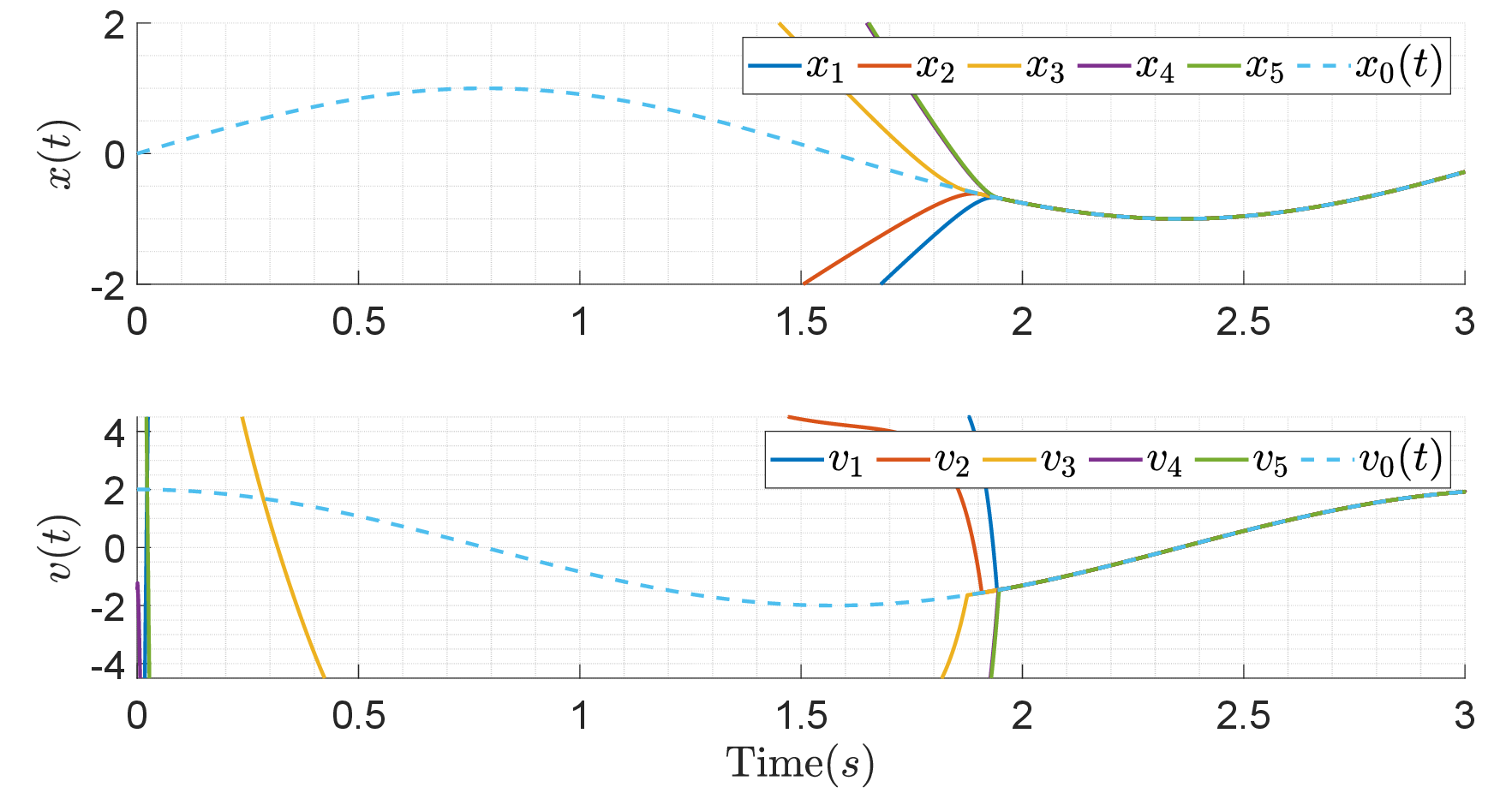}

 	\caption{ \textbf{Non-autonomous protocol.} Trajectories of each agent.}
 	\label{fig:sys_trajectories_non_auto}
 \end{figure}

In order to compare the two protocols presented in this work (autonomous and non-autonomous ones), we will define the slack function $s(x)$ as the error between the predefined \textit{UBST} and the convergence time of the variable $x$, i.e., $s(x)=T_{c}-T$ where $T_{c}$ is the predefined \textit{UBST} and $T$ is the actual convergence time. This index for the observation error and the tracking error variables is shown in Table \ref{tab:comparasion}. It can be seen that the slack for the non-autonomous protocol is lower than for the autonomous protocol.  
\begin{table}[H]
\centering
\begin{tabular}{||c |c |c| c||} 
 \hline
  & $s(\tilde{v})=T_{c_{1}}-T_{1}$ & $s(\tilde{x})=T_{o}-T_{2}$ & $s(\tilde{e})=\hat{T}-T_{3}$  \\ [0.5ex] 
 \hline\hline
 Autonomous protocol &0.08709  &0.8744 &0.743 \\ 
 Non-autonomous protocol & 0.0023 & 0.0016 & 0.1154 \\ [1ex]
 \hline
\end{tabular}
\caption{Slack of convergence time.}
\label{tab:comparasion}
 \end{table}
 These numerical examples show the effectiveness of the proposed consensus protocols.

\begin{figure}[h!]
 	\centering
 	\includegraphics[width=\textwidth]{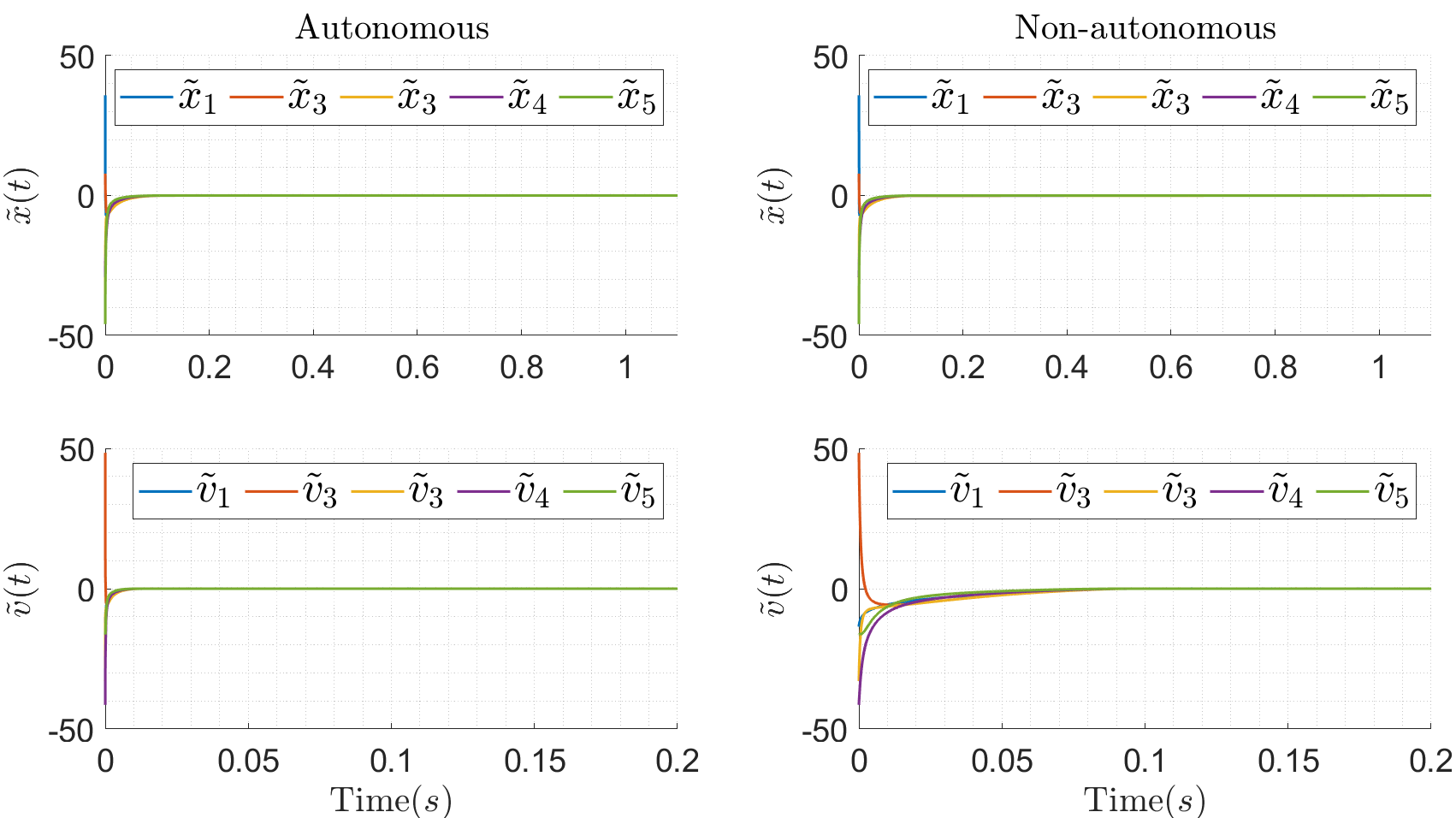}
 	\caption{Observation error of each agent.}
 	\label{fig:observer_errors}
 \end{figure}

\begin{figure}[h!]
 	\centering
 	\includegraphics[width=\textwidth]{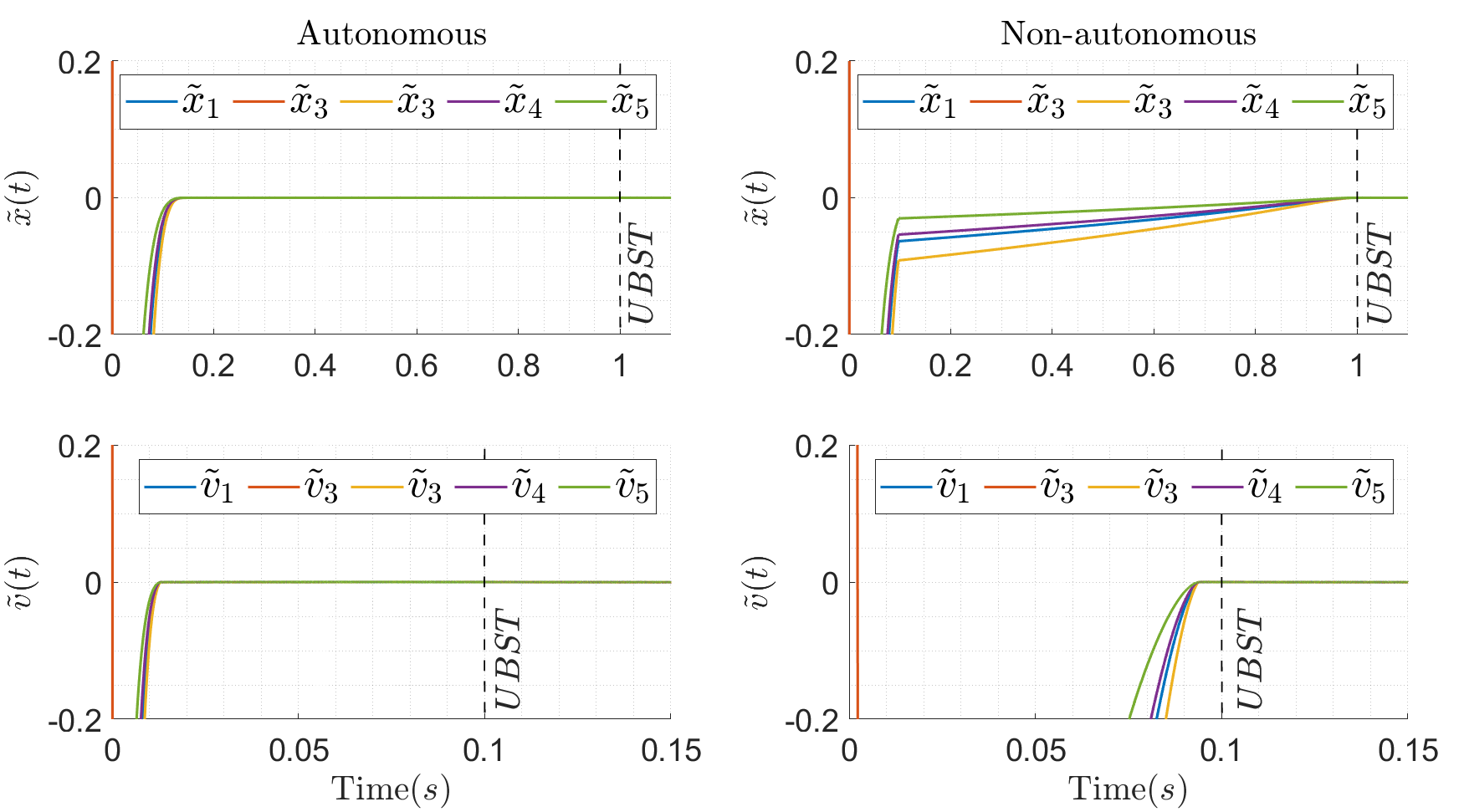}
 	\caption{Convergence of the observation error for each agent.}
 	\label{fig:observer_errors_zoom}
 \end{figure}
 
\begin{figure}[h!]
 	\centering
 	\includegraphics[width=\textwidth]{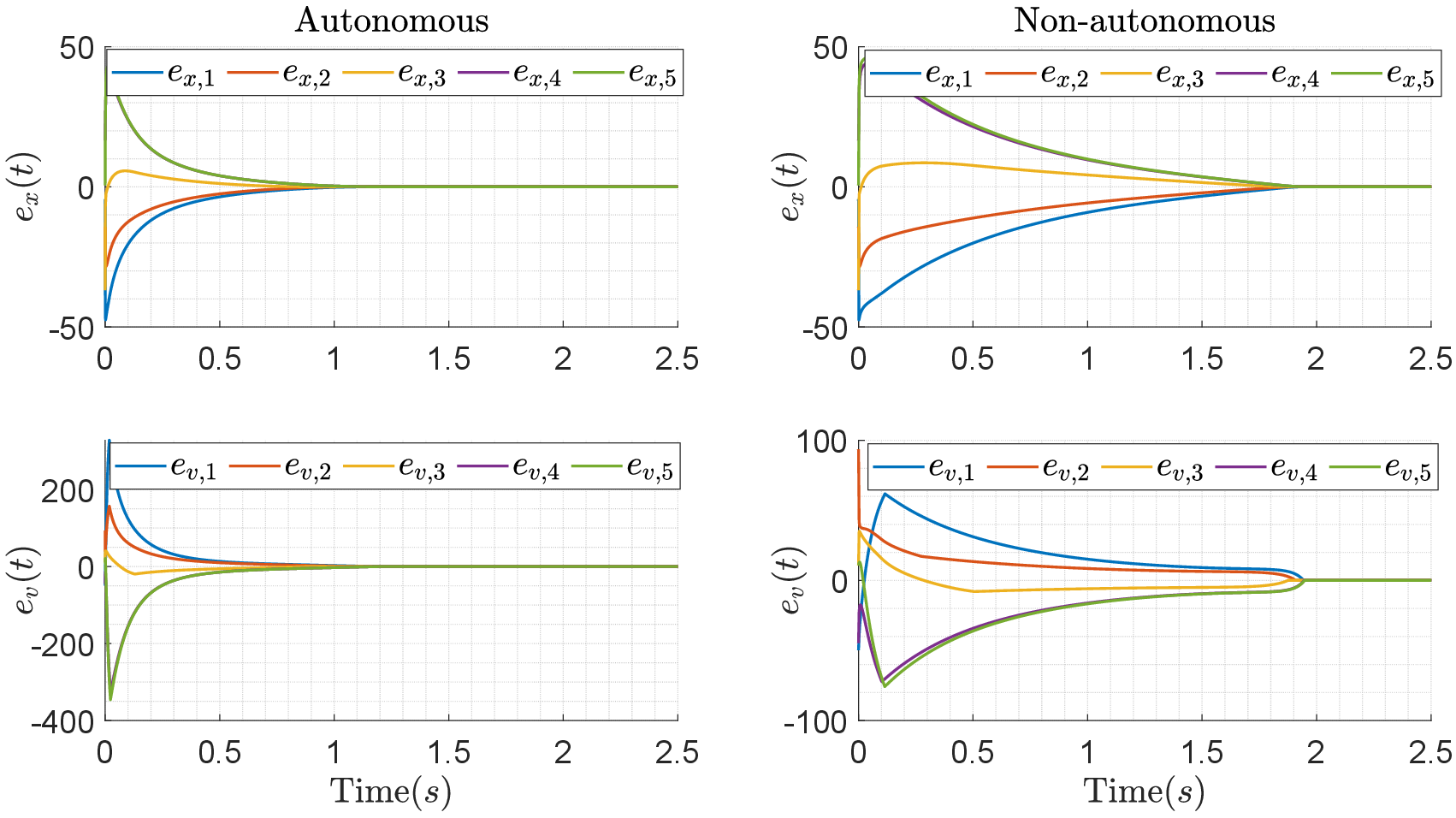}
 	\caption{Tracking error for each agent.}
 	\label{fig:tracking_errors}
 \end{figure}
 
 \begin{figure}[h!]
 	\centering
 	\includegraphics[width=\textwidth]{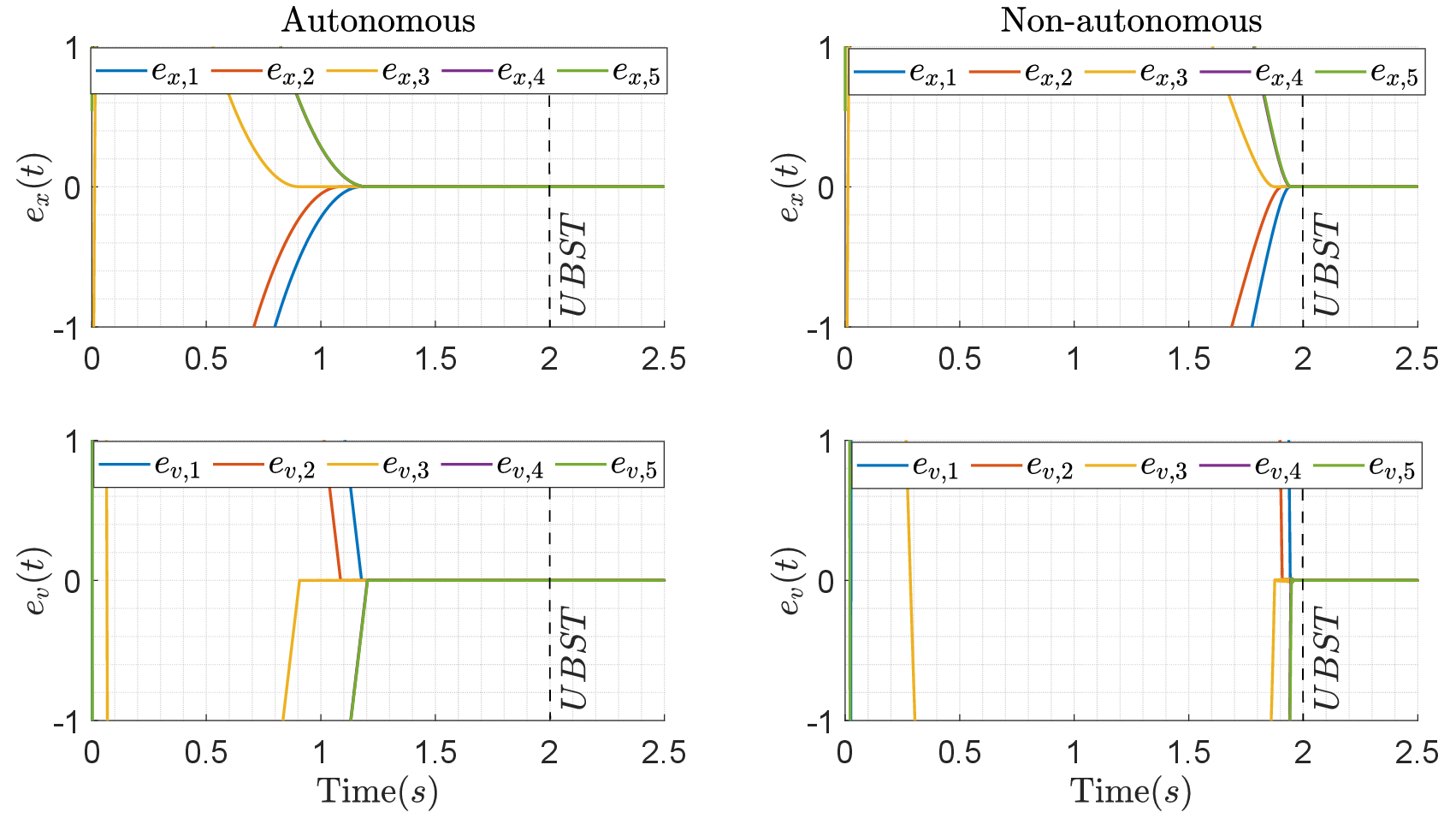}
 	\caption{Convergence of the tracking error for each agent.}
 	\label{fig:tracking_errors_zom}
 \end{figure}
\section{Conclusions}\label{sec:conclusions}
In this work, we presented novel protocols for the problem of consensus tracking with fixed-time convergence, for leader-follower multi-agent systems with double-integrator dynamics, where only a subset of followers has access to the state of the leader.
A distributed observer is proposed for each agent to estimate the leader state, and a local controller drives the agents towards the estimated state, both with fixed-time convergence. Two control strategies have been investigated and compared for the observer and controller parts. The first one is an autonomous protocol which ensures that the \textit{UBST} is set a priory by the user. Then, the previous strategy is redesigned using time-varying gains to obtain a non-autonomous protocol. This enables to obtain less conservative estimates of the \textit{UBST} while guaranteeing that the time-varying gains remain bounded. Future work includes the extension of the algorithm to chained form systems or high order MAS, the robustness against faults in the communication links and the extension of the protocol to directed graphs.

\section*{Apppendix}
\begin{lemma}\label{lem:sum_expo} \cite{Aldana-Lopez2018a}
Let $n\in N$. If $a=(a_{1},\dots,a_{n})$ is a sequence of positive numbers, then the following inequality is satisfied
\begin{equation}\label{eq:lemma_sum_expo}
    \frac{1}{n}\sum_{i=1}^{n}a_{i}\left(\alpha a_{i}^{p}+\beta a_{i}^{q} \right)^{k}\geq \left(\frac{1}{n}\sum_{i=1}^{n}a_{i} \right)\left(\alpha \left(\frac{1}{n}\sum_{i=1}^{n}a_{i} \right)^{p}+\beta \left(\frac{1}{n}\sum_{i=1}^{n}a_{i} \right)^{q}\right)^{k}.
\end{equation}{}
\end{lemma}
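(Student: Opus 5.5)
The inequality has the form $f(\bar a)\le \frac1n\sum_i f(a_i)$ with $\bar a=\frac1n\sum_i a_i$, so I would obtain it from Jensen's inequality. Define $f:[0,\infty)\to[0,\infty)$ by
\[
f(s)=s\left(\alpha s^{p}+\beta s^{q}\right)^{k}.
\]
The claim reduces to showing that $f$ is convex on $(0,\infty)$. Since all $a_i>0$, the average $\bar a$ lies in the same domain.

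The key step is to factor $f$ into monotone convex pieces.

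\textbf{Step 1.} Rewrite
\[
f(s)=\left(\alpha s^{p+1/k}+\beta s^{q+1/k}\right)^{k}=g(s)^k,
\]
where $g(s)=\alpha s^{a}+\beta s^{b}$ with $a=p+1/k$ and $b=q+1/k$.

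\textbf{Step 2.} Split into cases on $k$.
\begin{itemize}
\item If $k\ge 1$: $g$ has exponents $a,b>1$, so it is convex, nonnegative and increasing. The map $u\mapsto u^k$ is convex and increasing on $[0,\infty)$, so $f=g^k$ is convex as a composition.
\item If $k<1$: this composition argument fails. Instead I would check $f''\ge 0$ directly. Write $f=\exp(h)$ with $h=k\log g$. Then
\[
f''=f\left(h''+h'^2\right),
\]
so it suffices to show $h'^2+h''\ge 0$. This reduces to
\[
k\,g\,g''+(k^2-k)\,g'^2\ge 0,
\quad\text{i.e.}\quad
g\,g''\ge (1-k)\,g'^2 .
\]
\end{itemize}

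\textbf{Step 3.} Prove $g\,g''\ge(1-k)\,g'^2$ by expanding in the monomials $s^{2a-2}$, $s^{a+b-2}$, $s^{2b-2}$.
\begin{itemize}
\item The $\alpha^2$ coefficient is $a(a-1)-(1-k)a^2=a(ka-1)=a\,kp\ge 0$, using $ka=kp+1$.
\item The $\beta^2$ coefficient is $b\,kq\ge0$ by the same computation with $kb=kq+1$.
\item The cross coefficient is
\[
a(a-1)+b(b-1)-2(1-k)ab=k(a^2+b^2)-(a+b)+(1-k)(a-b)^2+2kab-\ldots
\]
More cleanly, it equals $(a^2-a)+(b^2-b)-2(1-k)ab$. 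I expect this to be the main obstacle. I would try to write it as $(a-b)^2+2kab-a-b$ and use $ka\ge1$, $kb\ge1$, which give $2kab\ge a+b$. That yields nonnegativity.
\end{itemize}
All three coefficients are nonnegative, so $f''\ge0$ on $(0,\infty)$. The $k\ge1$ case is also covered by this computation, so the case split is really only a sanity check.

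\textbf{Step 4.} Apply Jensen's inequality with uniform weights $1/n$:
\[
\frac1n\sum_i f(a_i)\ge f(\bar a).
\]
This is exactly \eqref{eq:lemma_sum_expo}. The hypotheses $kp<1$ and $kq>1$ are not needed; only $\alpha,\beta,p,q,k>0$ is used.
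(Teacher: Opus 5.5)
Your proof is correct, and Step 3 on its own completes it. The identity $f''=k\,g^{k-2}\bigl(g g''-(1-k)g'^2\bigr)$ and your three coefficients $a\,kp$, $b\,kq$ and $(a-b)^2+2kab-a-b\ge (a-b)^2$ (using $ka,kb\ge 1$) give convexity of $f$ on $(0,\infty)$ for every $k>0$, and Jensen finishes. You are also right that $kp<1<kq$ plays no role. The paper gives no proof of this lemma; it only cites \cite{Aldana-Lopez2018a}. So there is nothing to compare against, beyond noting that convexity plus Jensen is the natural route for an inequality of exactly this form.

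Two slips should be cleaned up, though neither affects the conclusion.
\begin{itemize}
\item The $k\ge 1$ bullet is false as stated. The exponent $a=p+1/k$ need not exceed $1$: with $k=2$ and $p=0.1$, $a=0.6$, and $g$ is not convex near $0$. The composition argument therefore fails there too. Delete the case split and rely on Step 3, which covers all $k>0$.
\item Your first displayed expression for the cross coefficient is wrong. $k(a^2+b^2)-(a+b)+(1-k)(a-b)^2$ already equals $a(a-1)+b(b-1)-2(1-k)ab$, so the extra term $+2kab-\ldots$ must be dropped. The form $(a-b)^2+2kab-a-b$ that you settle on is the correct one.
\end{itemize}

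The paper applies the lemma with $a_i=\abs{e_{2,i}}$, and these may vanish. Since $f$ is continuous at $0$ with $f(0)=0$, convexity extends to $[0,\infty)$, so your argument covers that case as well.
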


\begin{lemma}\cite{Basile1992} Let $z=[z_{1} \dots z_{n}]^{T}\in \mathbb{R}^{n}$ and
\begin{equation}
    \|z\|_{p}=\left(\sum_{i=1}^{n}\abs{z_{i}}^{p}\right)^{\frac{1}{p}}
\end{equation}
then, 
\begin{equation}
    \|z\|_{l}\leq \|z\|_{r}
\end{equation}{}
for $l> r>0$.
\label{lem:sum_abs}
\end{lemma}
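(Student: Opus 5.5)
The proof uses the positive homogeneity of $\|\cdot\|_{p}$ to reduce to the case $\|z\|_{r}=1$, and then compares the sums term by term.

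\emph{Trivial case and homogeneity.} If $z=0$, both sides vanish and there is nothing to prove, so assume $z\neq 0$. For every $p>0$ and every scalar $c$, the definition $\|z\|_{p}=\left(\sum_{i=1}^{n}\abs{z_{i}}^{p}\right)^{1/p}$ gives
\begin{equation}
\|cz\|_{p}=\left(\abs{c}^{p}\sum_{i=1}^{n}\abs{z_{i}}^{p}\right)^{1/p}=\abs{c}\,\|z\|_{p}.
\end{equation}
This is immediate from the definition and does not need the triangle inequality, so it also holds when $r<1$, where $\|\cdot\|_{r}$ is only a quasi-norm. Set $w=z/\|z\|_{r}$. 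Then $\|w\|_{r}=1$, and by homogeneity the claim $\|z\|_{l}\leq\|z\|_{r}$ is equivalent to $\|w\|_{l}\leq 1$.

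\emph{Normalized case.} Since $\sum_{i}\abs{w_{i}}^{r}=1$ and every term is nonnegative, each $\abs{w_{i}}^{r}\leq 1$, hence $\abs{w_{i}}\leq 1$. For a number $t\in[0,1]$ and exponents $l>r>0$ we have $t^{l}=t^{r}t^{l-r}\leq t^{r}$. Applying this to each coordinate and summing gives
\begin{equation}
\sum_{i=1}^{n}\abs{w_{i}}^{l}\leq\sum_{i=1}^{n}\abs{w_{i}}^{r}=1.
\end{equation}
Taking the $1/l$ power, which is monotone increasing, yields $\|w\|_{l}\leq 1$. Multiplying back by $\|z\|_{r}$ and using homogeneity once more gives $\|z\|_{l}\leq\|z\|_{r}$.

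The only point needing care is the coordinatewise bound $\abs{w_{i}}\leq 1$. It is exactly what makes the elementary inequality $t^{l}\leq t^{r}$ available, and it is supplied by the normalization. In the paper the lemma is used with $l=2$ and $r=1$, giving $\|e_{2}\|_{2}\leq\|e_{2}\|_{1}$, which is covered by this argument.
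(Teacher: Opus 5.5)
Your proof is correct. Normalizing to $\|w\|_{r}=1$ forces $\abs{w_i}\le 1$, which gives $\abs{w_i}^{l}\le\abs{w_i}^{r}$ term by term. Homogeneity then transfers $\|w\|_{l}\le 1$ back to $\|z\|_{l}\le\|z\|_{r}$; as you note, this step needs no triangle inequality, so it also holds for $0<r<1$.

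The paper gives no proof of its own for this lemma and simply cites \cite{Basile1992}, so there is no argument to compare against. Yours is the standard self-contained proof, and it covers the only use in the paper, $\|e_{2}\|_{2}\le\|e_{2}\|_{1}$.
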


\section*{Acknowledgements}
The authors would like to acknowledge the financial support of Intel Corporation for the development of this work.

\bibliographystyle{unsrt}

\end{document}